\documentclass[journal,comsoc]{IEEEtran}

\usepackage{balance}
\usepackage{amssymb,amsthm,amsmath,epsfig,latexsym,graphicx,bm,bbm,nccmath}
\allowdisplaybreaks[4] 
\usepackage{mathtools}

\usepackage{epstopdf}

\usepackage{booktabs}
\usepackage{tabularx}

\usepackage{tikz}
\usetikzlibrary{arrows.meta,positioning,fit,backgrounds,calc}

\usepackage{enumerate}
\usepackage{cite}
\usepackage{amsfonts}
\usepackage[ruled, vlined, linesnumbered, commentsnumbered, longend]{algorithm2e}
\SetKw{KwBreak}{break}

\usepackage{textcomp}
\usepackage{xcolor}
\def\BibTeX{{\rm B\kern-.05em{\sc i\kern-.025em b}\kern-.08em
		T\kern-.1667em\lower.7ex\hbox{E}\kern-.125emX}}
\usepackage[utf8x]{inputenc}
\usepackage{etoolbox}
\usepackage{multirow}
\usepackage[caption=false]{subfig}

\newtheorem{theorem}{Theorem}
\newtheorem{proposition}{Proposition}
\newtheorem{lemma}{Lemma}

\newtheorem{definition}{Definition}
\newtheorem{assumption}{Assumption}
\newtheorem{remark}{Remark}
\usepackage{easyReview} 

\makeatletter
\def\@IEEEfigurecaptionsepspace{\vskip 0pt\relax}
\def\@IEEEtablecaptionsepspace{\vskip 0pt\relax}
\makeatother

\begin{document}
	\bstctlcite{IEEEexample:BSTcontrol}
	\title{Joint Antenna Positioning and Pilot Power Allocation for Uplink Fluid Antenna-Aided Cell-Free Massive MIMO Systems}
	
	\author{Manobendu~Sarker and 
		Soumaya Cherkaoui,~\IEEEmembership{Senior Member,~IEEE}
		\thanks{M. Sarker and S. Cherkaoui are with the Department of Computer and Software Engineering, Polytechnique Montr{\'{e}}al, Canada. (Corresponding author's e-mail: manobendu.sarker@polymtl.ca)
	}}
	
	\maketitle
	
	
	\begin{abstract}
		This paper investigates joint fluid antenna (FA) position optimization and pilot power allocation for uplink cell-free massive MIMO networks supporting ultra-reliable low-latency communication (URLLC) traffic under pilot contamination.
		By continuously adjusting FA elements within a prescribed aperture,
		we adapt the spatial correlation structure to the varying propagation environment.
		We formulate a joint problem to minimize the average normalized mean-squared error (NMSE) subject to quality-of-service constraints governed by finite blocklength and Markov-modulated Poisson process queueing analyses.
		To ensure tractability, we establish an analytical quadratic relationship between the NMSE and the effective signal-to-interference-plus-noise ratio through eigenstructure alignment factors. This mapping translates complex latency requirements into explicit NMSE thresholds, maintaining a median absolute approximation error below $0.03$~dB and less than $10\%$ relative error for over $90\%$ of practical-regime samples. Given the nonconvex nature of the formulation, we propose an alternating optimization framework. Antenna positions are iteratively updated via successive convex approximation using analytical gradients and closed-form Euclidean projections. Pilot powers are allocated through a block coordinate descent scheme that employs safeguarded Newton steps to guarantee monotonic convergence. Furthermore, we introduce an auto-calibrated regularization variant to navigate the Pareto frontier between estimation accuracy and energy consumption. Numerical results show the proposed positioning yields up to 22\% NMSE improvement over fixed uniform arrays while operating 7--8$\times$ faster than particle swarm optimization. The power allocation achieves within 0.5\%  of the global optimum at over two orders of magnitude lower 
		computational cost, providing up to 39\% pilot energy reduction with negligible estimation penalty.
	\end{abstract}
	\vspace{-6mm}
	\section{Introduction}
	\label{sec:introduction}
	
	Cell-free massive multiple-input multiple-output (CF mMIMO) networks, where geographically distributed access points (APs) coherently serve user equipments (UEs) over shared time-frequency resources, constitute a promising distributed architecture for reliable wireless connectivity~\cite{ngo2017cellfree}. Under user-centric (UC) clustering \cite{buzzi2017cellfree}, each UE is coherently served by a selected subset of nearby APs via time-division duplex (TDD) operation with channel reciprocity, and the quality of uplink (UL) channel estimates directly governs the achievable signal-to-interference-plus-noise ratio (SINR) for both UL and downlink (DL) transmission~\cite{bjornson2017massive}. Pilot contamination, caused by the lack of orthogonal pilot sequences compared to the number of UEs in the network, introduces correlated estimation errors that persist even as the number of antennas grows, establishing channel estimation as a fundamental performance bottleneck~\cite{marzetta2010noncooperative}. The spatial correlation structure at each AP, determined by the array geometry and the propagation environment, shapes both the estimation accuracy and the interference rejection capability through the observation covariance and the combining vectors, respectively.
	
	Fluid antenna (FA) and movable antenna (MA) technologies provide a spatial degree of freedom by enabling continuous repositioning of antenna elements within a prescribed aperture, thereby reshaping the spatial correlation matrices that govern channel estimation and signal combining~\cite{wong2021fluid}. In single-cell settings, position optimization has been shown to adapt inter-element spacings to the angular scattering environment, yielding substantial capacity and diversity gains over fixed-geometry arrays~\cite{zhu2024movable}, while joint position and beamforming designs exploit the position-dependent channel structure for interference suppression and spatial multiplexing~\cite{zhu2024movable_multiuser, Gao2024movable_multicast}. These results confirm that position optimization offers a mechanism complementary to conventional precoding, operating on the physical array geometry rather than the signal domain. However, the extension to distributed multi-AP architectures introduces additional design dimensions, as position optimization at each AP must account for the spatial correlation experienced by multiple UEs with distinct angular profiles and pilot-sharing relationships.
	
	{Recent works integrate FA/MA elements into CF mMIMO to exploit the 
		additional geometric degrees of freedom. For example, 
		six-dimensional surface placement via Bayesian optimization is 
		studied for average sum-rate maximization~\cite{Shi6DMA_CF}, deep 
		reinforcement learning-based antenna positioning is developed for 
		Doppler mitigation in high-mobility CF systems~\cite{DaiMovableCF_HST}, 
		joint AP-side antenna positioning and uplink power control is 
		addressed for max-min rate fairness~\cite{OlyaeeFluidCF}, and 
		antenna positions, DL beamforming, and UE association are 
		simultaneously optimized for weighted sum-rate 
		maximization~\cite{ZhuJBUM_CFMA}. These studies establish antenna reconfigurability as a powerful geometric design mechanism for CF mMIMO but primarily optimize conventional rate or spectral-efficiency objectives and do not couple position optimization with pilot-domain resource management under pilot contamination.}
	
	
	{Pilot power allocation offers a complementary mechanism for managing 
		channel-estimation quality, since each UE's pilot power controls 
		both its own estimation SNR and the pilot contamination experienced 
		by co-pilot UEs. Prior work in this direction includes successive 
		approximation to minimize the maximum estimation error~\cite{Mai2018a}, 
		closed-form allocations derived from a zero-normalized mean-squared error (NMSE) equilibrium~\cite{Sark2301:Pilot}, and quadratic-transform updates that avoid solver dependence~\cite{Sarker_RL_2026}. These schemes operate under fixed array geometries, so the interplay between position-dependent spatial correlation and pilot-domain resource management has not been considered yet.}
	
	
	{Delivering ultra-reliable low-latency communication (URLLC) traffic in this joint design imposes stringent 
		probabilistic delay and reliability constraints, which require an 
		explicit mapping from physical-layer estimation parameters to 
		queueing-level performance guarantees~\cite{sarker2025NS}. Under finite blocklength (FBL) transmission with bursty traffic modeled as a Markov-modulated Poisson process (MMPP)~\cite{fischer1993markov}, the minimum required service rate translates, through the FBL expression, into a per-UE SINR target whose satisfaction is required to meet the prescribed delay-violation probability. 
		Existing SINR expressions for CF mMIMO involve trace products of 
		channel-estimate and interference covariance matrices whose 
		dependence on positions and powers is implicit, precluding direct 
		translation of quality-of-service (QoS) requirements into actionable 
		optimization targets.}
	
	
	{These developments reveal two complementary but disconnected research directions. Existing FA/MA-aided CF mMIMO studies exploit antenna reconfigurability for rate enhancement, interference management, mobility support, or UE association, but do not jointly address pilot power allocation under URLLC-oriented QoS constraints. Conversely, pilot-resource schemes mitigate pilot contamination and channel-estimation error under fixed array geometries. Hence, the joint treatment of AP-side FA configuration, pilot power allocation, and estimation-aware URLLC guarantees remains unaddressed.}
	
	{To address this challenge, we adopt the NMSE as a unifying estimation-quality metric and establish a quadratic mapping between the per-UE NMSE and effective SINR through eigenstructure-based alignment factors. This mapping converts the FBL- and MMPP-derived SINR requirements into explicit per-UE NMSE thresholds, enabling joint antenna-position and pilot power optimization under pilot contamination. Table~\ref{tab:related_work_comparison} summarizes the feature-level distinctions between representative related works and the proposed framework.}
	\begin{table*}[t]
		
		\centering
		\footnotesize
		\caption{{Comparison with Related FA/MA-Aided CF mMIMO and Pilot-Resource Optimization Works}}
		\label{tab:related_work_comparison}
		\resizebox{\textwidth}{!}{
			\begin{tabular}{lcccccccc}
				\hline
				\textbf{Work} & \textbf{FA/MA} & \textbf{AP-side FA/MA} & \textbf{Pilot} & \textbf{Imperfect} & \textbf{Pilot pow.} & \textbf{URLLC/} & \textbf{QoS/} & \textbf{NMSE--SINR} \\
				&  & \textbf{opt.} & \textbf{cont.} & \textbf{CSI/CE/NMSE} & \textbf{opt.} & \textbf{FBL} & \textbf{delay} & \textbf{map} \\
				\hline
				\cite{OlyaeeFluidCF} & $\checkmark$ & $\checkmark$ & $\times$ & $\checkmark$ & $\times$ & $\times$ & $\times$ & $\times$ \\
				\cite{Shi6DMA_CF} & $\checkmark$ & $\checkmark$ & $\times$ & $\times$ & $\times$ & $\times$ & $\times$ & $\times$ \\
				\cite{DaiMovableCF_HST} & $\checkmark$ & $\checkmark$ & $\times$ & $\times$ & $\times$ & $\times$ & $\times$ & $\times$ \\
				\cite{ZhuJBUM_CFMA} & $\checkmark$ & $\checkmark$ & $\times$ & $\times$ & $\times$ & $\times$ & $\times$ & $\times$ \\
				\cite{Mai2018a} & $\times$ & $\times$ & $\checkmark$ & $\checkmark$ & $\checkmark$ & $\times$ & $\times$ & $\times$ \\
				\cite{Sark2301:Pilot} & $\times$ & $\times$ & $\checkmark$ & $\checkmark$ & $\checkmark$ & $\times$ & $\times$ & $\times$ \\
				\cite{Sarker_RL_2026} & $\times$ & $\times$ & $\checkmark$ & $\checkmark$ & $\checkmark$ & $\times$ & $\times$ & $\times$ \\
				\textbf{This work} & $\checkmark$ & $\checkmark$ & $\checkmark$ & $\checkmark$ & $\checkmark$ & $\checkmark$ & $\checkmark$ & $\checkmark$ \\
				\hline
		\end{tabular}}
		\vspace{1mm}
		\begin{minipage}{0.98\textwidth} \footnotesize \textbf{AP-side FA/MA opt.} = optimization of antenna positions, ports, orientations, or placements at the APs; \textbf{Imperfect CSI/CE/NMSE} = explicit modeling of imperfect channel state information (CSI) or consideration of channel-estimation error/NMSE as a design objective or performance metric; \textbf{Pilot cont.} = pilot contamination; and \textbf{Pilot pow. opt.} = pilot power optimization. \end{minipage}
		\vspace{-7mm}
	\end{table*}
	
	{As shown in Table~\ref{tab:related_work_comparison}, the proposed framework jointly incorporates AP-side FA optimization, pilot contamination, pilot power allocation, FBL transmission, and queueing-level QoS requirements.} To the best of our knowledge, no existing work provides a tractable analytical connection between channel-estimation quality and SINR guarantees that supports this joint design under MMPP-driven URLLC traffic constraints in CF mMIMO networks. The main contributions are summarized as follows.
	\begin{enumerate}
		\item We establish an analytical quadratic relationship between the per-UE NMSE and the effective UL SINR through eigenstructure-based alignment factors, yielding a mapping with quantifiable approximation error that vanishes under practical pilot SNR and correlation conditions. This mapping converts URLLC QoS requirements, specified through FBL rate expressions and MMPP-based queueing analysis based on~\cite{fischer1993markov,latouche1999matrix,lucantoni1991new}, into per-UE NMSE thresholds that decouple the physical-layer optimization from the queueing-level service specification. The resulting joint problem is nonconvex and solved via alternating optimization (AO).
		
		\item We develop a successive convex approximation (SCA) procedure for fluid antenna position optimization that constructs linearized surrogates with analytical gradient computation through the position-dependent spatial correlation model, and employs closed-form projected updates via isotonic regression. The scheme yields NMSE reduction comparable to metaheuristic benchmarks while operating at an order-of-magnitude lower computational cost.
		
		\item We propose a block coordinate descent (BCD) framework for pilot power allocation that decomposes the multi-UE optimization into scalar subproblems by exploiting pilot contamination separability, with closed-form gradient and Hessian expressions enabling safeguarded Newton updates with guaranteed monotonic convergence. A power-aware extension introduces auto-calibrated regularization that exposes a penalty-free operating region near the NMSE optimum, providing tunable Pareto-optimal operating points for energy-constrained deployments.
		
		\item Extensive numerical evaluation validates the analytical approximations and confirms that the proposed positioning and power allocation schemes consistently achieve near-optimal estimation quality at substantially reduced computational cost, while the power-aware variant enables significant pilot energy savings with negligible estimation degradation.
	\end{enumerate}

	\vspace{-4.5mm}
	\section{System Model}
	\label{sec:system_model}
	
	\subsection{Network Architecture and Fluid Antenna Model}
	\label{subsec:network_architecture}
	
	We consider a CF mMIMO network shown in Fig.~\ref{fig:FAS},  comprising $M$ geographically distributed APs that jointly serve $K$ single-antenna UEs over shared time-frequency resources \cite{ngo2017cellfree}. All APs connect to a central processing unit (CPU) through fronthaul links, enabling centralized baseband processing and coherent signal combining. The system operates under TDD with channel reciprocity exploited for DL precoding based on UL channel estimates. We adopt a UC architecture where each UE $k$ is served by a subset of APs denoted $\mathcal{M}_k \subseteq \{1, \ldots, M\}$, selected based on large-scale channel conditions \cite{buzzi2017cellfree}. Symmetrically, $\mathcal{K}_m \subseteq \{1, \ldots, K\}$ denotes the UEs served by AP $m$, with the relationship $\mathcal{M}_k = \{m : k \in \mathcal{K}_m\}$ holding by construction. This clustering approach balances performance gains from coherent processing against fronthaul overhead and computational complexity. 
	\begin{figure}[tb]
		\centering
		\includegraphics[scale=0.55]{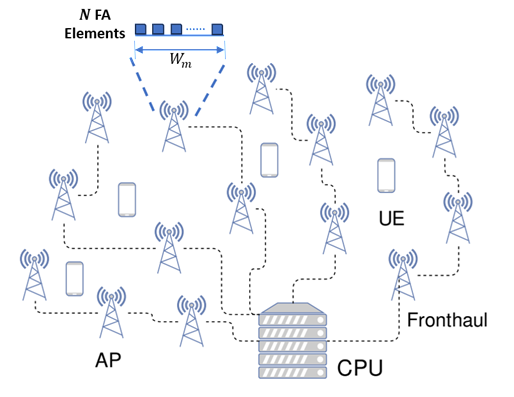}
		\caption{System diagram of the proposed FA-aided CF mMIMO network serving URLLC traffic.}
		\label{fig:FAS}
	\end{figure}
	
	Each AP $m$ is equipped with $N$ reconfigurable fluid antenna elements arranged along a linear axis of total length $W_m$ \cite{wong2021fluid}. Unlike conventional fixed arrays, fluid antennas can dynamically reposition within a prescribed region to exploit spatial channel variations. Let $\mathbf{x}_m = [x_m(1), \ldots, x_m(N)]^\mathsf{T} \in \mathbb{R}^N$ denote the position vector for AP $m$, where $x_m(n)$ specifies the coordinate of element $n$. To ensure physical realizability and limit electromagnetic coupling, the positions must satisfy
	\begin{align}
		0 \leq x_m(1) < \cdots < x_m(N) \leq W_m,  \nonumber \\
		x_m(n+1) - x_m(n) \geq d_{\min},
		\label{eq:position_constraints}
	\end{align}
	where $d_{\min} = \lambda/2$ is the minimum inter-element spacing at carrier wavelength $\lambda$. The continuous position model reflects the capability of fluid antenna technologies to occupy arbitrary locations within the movement range, and captures the smooth dependence of spatial correlation on element positions. Discrete positioning for switched-port implementations can be treated as a special case with positions restricted to a predefined grid.

	\vspace{-3mm}
	\subsection{Spatially Correlated Channel Model}
	\label{subsec:channel_model}
	
	We consider non-line-of-sight propagation where the UL channel from UE $k$ to AP $m$ is given by
	\begin{equation}
		\mathbf{h}_{k,m}(\mathbf{x}_m) = \sqrt{\beta_{k,m}} \, \mathbf{g}_{k,m}(\mathbf{x}_m) \in \mathbb{C}^N.
		\label{eq:channel_model}
	\end{equation}
	Here, $\beta_{k,m}$ captures large-scale fading effects including path loss and shadowing, which remain constant over many coherence intervals and depend on the AP-UE geometry but not on antenna positions within the array aperture. The small-scale fading vector $\mathbf{g}_{k,m}(\mathbf{x}_m) \sim \mathcal{CN}(\mathbf{0}, \mathbf{R}_{k,m}(\mathbf{x}_m))$ follows a spatially correlated Rayleigh distribution with position-dependent correlation matrix $\mathbf{R}_{k,m}(\mathbf{x}_m) \in \mathbb{C}^{N \times N}$.
	
	Following the widely adopted local scattering model \cite{adhikary2013joint, bjornson2017massive}, the spatial correlation between elements $l$ and $l'$ is
	\begin{align}
		\scalebox{1}{$
			\begin{aligned}
				[\mathbf{R}_{k,m}(\mathbf{x}_m)]_{l,l'} & =  e^{j \frac{2\pi}{\lambda} \Delta x_{l,l'}^{(m)} \sin\theta_{k,m}} \\
				& \exp\left( -\frac{1}{2} \left( \frac{2\pi \sigma_\theta}{\lambda} \Delta x_{l,l'}^{(m)} \cos\theta_{k,m} \right)^2 \right),
			\end{aligned}
			$}
		\label{eq:correlation_model}
	\end{align}
	where $\Delta x_{l,l'}^{(m)} = x_m(l) - x_m(l')$ is the signed inter-element separation, $\theta_{k,m}$ denotes the nominal angle of arrival from UE $k$ to AP $m$, and $\sigma_\theta$ characterizes the angular spread due to local scatterers around the UE. The first exponential represents the deterministic phase shift, while the second captures the decorrelation arising from angular dispersion.
	
	{Even though the adopted correlated Rayleigh model focuses on non-line-of-sight (NLoS) propagation, the framework can be extended to Rician fading \cite{10679239} by representing the channel as the sum of a position-dependent deterministic LoS response and a correlated diffuse component. When the LoS response is known or tracked as part of the long-term channel statistics, channel estimation follows a nonzero-mean linear minimum mean-squared error (LMMSE) formulation in which the deterministic mean is removed from the pilot observation before estimating the diffuse component. Since both the LoS response and the diffuse covariance depend on the antenna positions, the same statistical-CSI-based design principle remains applicable, although the subsequent analysis must account for the additional LoS-dependent terms. A complete Rician extension is left for future work.}
	
	
	\vspace{-4mm}
	\subsection{UL Pilot Transmission and Channel Estimation}
	\label{subsec:channel_estimation}
	
	Each coherence block spans $\tau_c$ symbols, partitioned into $\tau_p$ symbols for UL pilot transmission and $\tau_c - \tau_p$ symbols for data. During training, UE $k$ transmits pilot sequence $\boldsymbol{\phi}_k \in \mathbb{C}^{\tau_p}$ with $\|\boldsymbol{\phi}_k\|^2 = \tau_p$ and power $p_k^{\mathrm{p}}$. As in practical deployments, we consider $K>\tau_p$, necessitating the reuse of pilot sequences among UEs, which leads to pilot contamination. Let $\pi_k \in \{1, \ldots, \tau_p\}$ denote the pilot index assigned to UE $k$, and define the pilot-sharing set $\mathcal{P}_k = \{j \in \mathcal{K} : \pi_j = \pi_k\}$ containing all UEs using the same pilot.
	
	The received pilot signal at AP $m$, after correlation with the pilot of UE $k$, yields the sufficient statistic
	\begin{equation}
		\mathbf{y}_{k,m}^{\mathrm{p}} = \sum_{j \in \mathcal{P}_k} \sqrt{\tau_p p_j^{\mathrm{p}}} \, \mathbf{h}_{j,m}(\mathbf{x}_m) + \mathbf{n}_{k,m}^{\mathrm{p}},
		\label{eq:pilot_observation}
	\end{equation}
	where $\mathbf{n}_{k,m}^{\mathrm{p}} \sim \mathcal{CN}(\mathbf{0}, \sigma^2 \mathbf{I}_N)$ is receiver noise. The summation over $\mathcal{P}_k$ captures pilot contamination from co-pilot UEs, a fundamental limitation in massive MIMO that persists even with unlimited antennas \cite{marzetta2010noncooperative}.
	
	Applying the LMMSE criterion \cite{kay1993fundamentals}, the channel estimate is
	\begin{equation}
		\widehat{\mathbf{h}}_{k,m}(\mathbf{x}_m) = \sqrt{\tau_p p_k^{\mathrm{p}}} \, \beta_{k,m} \mathbf{R}_{k,m}(\mathbf{x}_m) \boldsymbol{\Psi}_{k,m}^{-1}(\mathbf{x}_m) \, \mathbf{y}_{k,m}^{\mathrm{p}},
		\label{eq:lmmse_estimate}
	\end{equation}
	where the observation covariance matrix is
	\begin{equation}
		\boldsymbol{\Psi}_{k,m}(\mathbf{x}_m) = \sum_{j \in \mathcal{P}_k} \tau_p p_j^{\mathrm{p}} \beta_{j,m} \mathbf{R}_{j,m}(\mathbf{x}_m) + \sigma^2 \mathbf{I}_N.
		\label{eq:psi_matrix}
	\end{equation}
	Notably, $\boldsymbol{\Psi}_{k,m}(\mathbf{x}_m)$ depends on the correlation matrices of all co-pilot UEs, coupling the estimation quality of UE $k$ to the spatial characteristics of interfering channels.
	
	The channel estimate $\widehat{\mathbf{h}}_{k,m}$ and estimation error $\widetilde{\mathbf{h}}_{k,m} = \mathbf{h}_{k,m} - \widehat{\mathbf{h}}_{k,m}$ are uncorrelated with covariance matrices
	\begin{align}
		\boldsymbol{\Phi}_{k,m}(\mathbf{x}_m) &= \tau_p p_k^{\mathrm{p}} \beta_{k,m}^2 \mathbf{R}_{k,m} \boldsymbol{\Psi}_{k,m}^{-1} \mathbf{R}_{k,m}, \label{eq:phi_matrix} \\
		\mathbf{C}_{k,m}(\mathbf{x}_m) &= \beta_{k,m} \mathbf{R}_{k,m} - \boldsymbol{\Phi}_{k,m}. \label{eq:error_covariance}
	\end{align}
	For co-pilot UEs $j \in \mathcal{P}_k$, the cross-covariance $\boldsymbol{\Phi}_{k,m}^{(j)} = \mathbb{E}\{\widehat{\mathbf{h}}_{k,m} \mathbf{h}_{j,m}^\mathsf{H}\} = \tau_p \sqrt{p_k^{\mathrm{p}} p_j^{\mathrm{p}}} \beta_{k,m} \beta_{j,m} \mathbf{R}_{k,m} \boldsymbol{\Psi}_{k,m}^{-1} \mathbf{R}_{j,m}$ quantifies the statistical coupling due to pilot sharing, which manifests as coherent interference during data detection.
	
	\vspace{-3mm}
	\subsection{UL Data Transmission}
	\label{subsec:data_transmission}
	
	During UL data transmission, UE $k$ sends symbol $s_k$ with $\mathbb{E}\{|s_k|^2\} = 1$ and power $p_k^{\mathrm{u}}$. The received signal at AP $m$ is
	\begin{equation}
		\mathbf{y}_m = \sum_{k=1}^{K} \sqrt{p_k^{\mathrm{u}}} \, \mathbf{h}_{k,m}(\mathbf{x}_m) s_k + \mathbf{n}_m,
		\label{eq:received_signal}
	\end{equation}
	where $\mathbf{n}_m \sim \mathcal{CN}(\mathbf{0}, \sigma^2 \mathbf{I}_N)$. Under centralized processing, APs forward received signals to the CPU, which performs joint detection. Using the channel estimates as combining vectors (maximum ratio combining), the CPU forms the decision statistic for UE $k$ as
	\begin{equation}
		\hat{s}_k = \sum_{m \in \mathcal{M}_k} \widehat{\mathbf{h}}_{k,m}^\mathsf{H}(\mathbf{x}_m) \, \mathbf{y}_m.
		\label{eq:mrc_detection}
	\end{equation}
	
	\vspace{-3mm}
	\subsection{UL SINR and FBL Rate}
	\label{subsec:sinr_rate}
	
	Applying the use-and-then-forget (UatF) capacity bounding technique from \cite{marzetta2016fundamentals} to \eqref{eq:mrc_detection}, after substituting \eqref{eq:received_signal} and treating channel uncertainty and interference as worst-case Gaussian noise, we obtain the effective UL SINR for UE~$k$ as
	\begin{equation}
		\mathrm{SINR}_k = \frac{\mathrm{DS}_k}{\mathrm{BU}_k + \mathrm{UI}_k + \mathrm{CI}_k + \mathrm{N}_k},
		\label{eq:sinr}
	\end{equation}
	where the signal and interference components are defined below.
	
	The desired signal power, capturing the coherent array gain, is given by
	\begin{equation}
		\scalebox{1}{$\mathrm{DS}_k = p_k^{\mathrm{u}} \left| \sum_{m \in \mathcal{M}_k} \mathrm{tr}(\boldsymbol{\Phi}_{k,m}) \right|^2$}.
		\label{eq:DS}
	\end{equation}
	
	The beamforming uncertainty arising from imperfect channel knowledge contributes
	\begin{equation}
		\scalebox{1}{$\mathrm{BU}_k = p_k^{\mathrm{u}} \sum_{m \in \mathcal{M}_k} \beta_{k,m} \, \mathrm{tr}(\boldsymbol{\Phi}_{k,m} \mathbf{R}_{k,m}).$}
		\label{eq:BU}
	\end{equation}
	
	Next, the non-coherent interference from UEs that do not share pilots with UE~$k$ is
	\begin{equation}
		\scalebox{1}{$\mathrm{UI}_k = \sum_{j \notin \mathcal{P}_k} p_j^{\mathrm{u}} \sum_{m \in \mathcal{M}_k} \beta_{j,m} \, \mathrm{tr}(\boldsymbol{\Phi}_{k,m} \mathbf{R}_{j,m}).$}
		\label{eq:UI}
	\end{equation}
	
	The coherent interference from pilot-sharing UEs, which accumulates constructively across APs, is expressed as
	\begin{align}
		\scalebox{0.95}{$
			\begin{aligned}
				\mathrm{CI}_k  
				&= \sum_{j \in \mathcal{P}_k \setminus \{k\}} p_j^{\mathrm{u}} 
				\Bigg[
				\left| \sum_{m \in \mathcal{M}_k} \mathrm{tr}\!\big(\boldsymbol{\Phi}_{k,m}^{(j)}\big) \right|^2 \\
				&
				+ \sum_{m \in \mathcal{M}_k} \left(
				\beta_{j,m} \, \mathrm{tr}\!\big(\boldsymbol{\Phi}_{k,m} \mathbf{R}_{j,m}\big) 
				- \left| \mathrm{tr}\!\big(\boldsymbol{\Phi}_{k,m}^{(j)}\big) \right|^2
				\right)
				\Bigg].
			\end{aligned}
			$}
		\label{eq:CI}
	\end{align}
	
	Finally, the effective noise power after combining is given by
	\begin{equation}
		\mathrm{N}_k = \sigma^2 \sum_{m \in \mathcal{M}_k} \mathrm{tr}(\boldsymbol{\Phi}_{k,m}).
		\label{eq:N}
	\end{equation}
	
	For URLLC applications requiring short packets with stringent reliability, the Shannon capacity is an optimistic benchmark. We adopt the FBL achievable rate approximation, which accounts for the rate penalty due to limited coding blocklength. Let $L_k$ denote the packet size in bits and $\epsilon_k^{\mathrm{dec}}$ the target decoding error probability. The achievable rate for UE $k$ is approximately
	\begin{equation}
		\resizebox{.95\hsize}{!}{$R_k \approx B \cdot \frac{\tau_c - \tau_p}{\tau_c} \left[ \log_2(1 + \mathrm{SINR}_k) - \sqrt{\frac{V_k}{L_k}} \frac{Q^{-1}(\epsilon_k^{\mathrm{dec}})}{\ln 2} \right]^+$},
		\label{eq:fbl_rate}
	\end{equation}
	where $B$ is the system bandwidth, $Q^{-1}(\cdot)$ is the inverse Gaussian Q-function, $[\cdot]^+ = \max\{0, \cdot\}$, and $V_k \approx 1 - (1 + \mathrm{SINR}_k)^{-2}$ is the channel dispersion \cite{10114604}. 
	
	\vspace{-3mm}
	\subsection{URLLC Traffic Model and QoS Requirements}
	\label{subsec:urllc_traffic}
	To capture the bursty traffic behavior typical of URLLC services, we model the packet arrivals of UE~$k$ using a two-state MMPP~\cite{fischer1993markov,latouche1999matrix}. The OFF and ON states are ordered as $(0,1)$, with ON-state arrival rate $\lambda_{\mathrm{ON},k}$, ON$\to$OFF transition rate $\xi_k$, and OFF$\to$ON transition rate $\nu_k$. The underlying phase generator and arrival-rate matrices are
	\begin{equation}
		\mathbf{Q}_k^{\mathrm{M}} =
		\begin{bmatrix}
			-\nu_k & \nu_k\\
			\xi_k & -\xi_k
		\end{bmatrix},
		\qquad
		\mathbf{\Lambda}_k =
		\begin{bmatrix}
			0 & 0\\
			0 & \lambda_{\mathrm{ON},k}
		\end{bmatrix},
		\label{eq:mmpp_generator}
	\end{equation}
	so that the average arrival rate is $\lambda_{\mathrm{avg},k}=\nu_k\lambda_{\mathrm{ON},k}/(\xi_k+\nu_k)$.
	
	Each UE is modeled as an MMPP/M/1 queue with exponential service time $S_k\sim\mathrm{Exp}(\mu_k)$ and service rate $\mu_k=R_k/L_k$, where $R_k$ is the achievable rate in~\eqref{eq:fbl_rate} and $L_k$ is the packet size in bits. Queue stability requires $\lambda_{\mathrm{avg},k}<\mu_k$. Let $\widetilde{S}_k(s)=\mu_k/(\mu_k+s)$ denote the service-time Laplace--Stieltjes transform. The associated queueing matrix $\mathbf{Z}_k$ is the unique substochastic solution of
	\begin{equation}
		\mathbf{Z}_k
		=
		\mu_k\Big(\mu_k\mathbf{I}_2-\mathbf{Q}_k^{\mathrm{M}}
		+\mathbf{\Lambda}_k-\mathbf{\Lambda}_k\mathbf{Z}_k\Big)^{-1},
		\label{eq:queue_matrix}
	\end{equation}
	and its invariant probability vector $\mathbf{g}_k$ satisfies $\mathbf{g}_k\mathbf{Z}_k=\mathbf{g}_k$ and $\mathbf{g}_k\mathbf{1}_2=1$. Then the stationary Laplace--Stieltjes transform of the queueing delay $W_k^{\mathrm{q}}$ is~\cite{lucantoni1991new}
	\begin{equation}
		\scalebox{0.95}{$
			\begin{aligned}
				\widetilde{W}_k^{\,\mathrm{q}}(s)
				= \frac{s\left(1-\lambda_{\mathrm{avg},k}/\mu_k\right)}
				{\lambda_{\mathrm{avg},k}}\,
				\mathbf{g}_k &
				\Big[s\mathbf{I}_2+\mathbf{Q}_k^{\mathrm{M}}
				-\mathbf{\Lambda}_k
				+\widetilde{S}_k(s)\mathbf{\Lambda}_k\Big]^{-1} \\
				& \mathbf{\Lambda}_k\mathbf{1}_2, \ \Re(s)\ge 0.
			\end{aligned}
			$}
		\label{eq:mmpp_waiting_lst}
	\end{equation}
		
	The end-to-end delay is $W_k=W_k^{\mathrm{q}}+S_k$, and the URLLC QoS requirement imposes
	\begin{equation}
		\Pr(W_k > D_{\max,k}) \le \epsilon_k^{\mathrm{delay}},
		\label{eq:delay_constraint}
	\end{equation}
	where $D_{\max,k}$ is the maximum tolerable latency and $\epsilon_k^{\mathrm{delay}}$ is the delay violation probability.\footnote{The delay violation probability $\epsilon_k^{\mathrm{delay}}$ is a queueing-level reliability metric and should not be confused with the decoding error probability $\epsilon_k^{\mathrm{dec}}$ in the finite-blocklength rate expression.} Since \eqref{eq:delay_constraint} does not admit a closed-form inversion for $\mu_k$, the minimum service rate
	$\mu_k^{\min}=\inf\{\mu_k>\lambda_{\mathrm{avg},k}:\Pr(W_k>D_{\max,k})\le\epsilon_k^{\mathrm{delay}}\}$ is obtained numerically from \eqref{eq:mmpp_waiting_lst}. The corresponding minimum rate requirement is $R_k^{\min}=\mu_k^{\min}L_k$, which, through \eqref{eq:fbl_rate}, yields the SINR constraint $\mathrm{SINR}_k \ge \bar{\gamma}_k$, where $\bar{\gamma}_k$ is obtained by inverting the finite-blocklength rate expression at $R_k=R_k^{\min}$.
	
	\vspace{-5mm}
	\section{Problem Formulation}
	\label{sec:problem_formulation}
	
	Before formulating the joint optimization problem of FA positions and pilot power allocation, we first define the channel estimation quality parameter and establish its relationship with system performance. 
	\vspace{-3mm}
	\subsection{Channel Estimation Quality Metric}
	\label{subsec:nmse_metric}
	
	We quantify channel estimation quality using the NMSE. For the link between UE $k$ and AP $m$, the NMSE is defined as
	\begin{equation}
		\mathrm{NMSE}_{k,m}(\mathbf{x}_m) = \frac{\mathbb{E}\{\|\mathbf{h}_{k,m} - \widehat{\mathbf{h}}_{k,m}\|^2\}}{\mathbb{E}\{\|\mathbf{h}_{k,m}\|^2\}}.
		\label{eq:nmse_def}
	\end{equation}
	Substituting the error covariance from \eqref{eq:error_covariance} yields
	\begin{equation}
		\mathrm{NMSE}_{k,m}(\mathbf{x}_m) = 1 - \frac{\mathrm{tr}(\boldsymbol{\Phi}_{k,m}(\mathbf{x}_m))}{\beta_{k,m} \, \mathrm{tr}(\mathbf{R}_{k,m}(\mathbf{x}_m))},
		\label{eq:nmse_explicit}
	\end{equation}
	where the dependence on antenna positions enters through both the channel estimate covariance $\boldsymbol{\Phi}_{k,m}$ and the spatial correlation matrix $\mathbf{R}_{k,m}$. This position dependence is a key enabler for fluid antenna optimization: by adjusting element locations, the system can reshape the spatial correlation structure to enhance channel estimation accuracy, thereby improving the resulting SINR.
	
	For UE $k$ served by multiple APs, we define the effective NMSE as the weighted average across the serving cluster
	\begin{equation}
		\mathrm{NMSE}_k(\mathbf{x}) = \frac{\sum_{m \in \mathcal{M}_k} \mathrm{tr}(\mathbf{C}_{k,m}(\mathbf{x}_m))}{\sum_{m \in \mathcal{M}_k} \beta_{k,m} \, \mathrm{tr}(\mathbf{R}_{k,m}(\mathbf{x}_m))},
		\label{eq:nmse_effective}
	\end{equation}
	where the weighting by $\beta_{k,m}\mathrm{tr}(\mathbf{R}_{k,m})$ emphasizes
	estimation quality at stronger links that dominate the combined
	signal.
	
	\vspace{-3mm}
	\subsection{SINR-NMSE Relationship via Eigenstructure Approximation}
	\label{subsec:sinr_nmse}
	
	The SINR in~\eqref{eq:sinr} involves trace products
	$\mathrm{tr}(\boldsymbol{\Phi}_{k,m}\mathbf{R}_{j,m})$ that depend on eigenstructure
	alignment between covariance matrices, precluding closed-form
	NMSE dependence~\cite{bjornson2017massive}. To obtain a tractable SINR-NMSE relationship, we introduce alignment factors that capture the geometric relationship between covariance matrices and establish an approximation with quantifiable error.
	
	\begin{definition}
		\label{def:alignment_factors}
		For UE $k$ and AP $m$, the self-alignment factor and cross-alignment factor are defined as
		\begin{align}
			\bar{\eta}_{k,m}(\mathbf{x}_m) &\triangleq \frac{\mathrm{tr}(\mathbf{R}_{k,m}^2(\mathbf{x}_m))}{\mathrm{tr}(\mathbf{R}_{k,m}(\mathbf{x}_m))}, \label{eq:eta_self} \\
			\bar{\eta}_{j,m}^{(k)}(\mathbf{x}_m) &\triangleq \frac{\mathrm{tr}(\mathbf{R}_{k,m}(\mathbf{x}_m) \mathbf{R}_{j,m}(\mathbf{x}_m))}{\mathrm{tr}(\mathbf{R}_{k,m}(\mathbf{x}_m))}. \label{eq:eta_cross}
		\end{align}
	\end{definition}
	
	The self-alignment factor $\bar{\eta}_{k,m}$ captures the eigenvalue concentration of $\mathbf{R}_{k,m}$, whereas $\bar{\eta}_{j,m}^{(k)}$ quantifies the spatial overlap between the covariance structures of UEs $k$ and $j$. Both factors are computable from statistical CSI and depend on antenna positions through the inter-element spacings in \eqref{eq:correlation_model}.
	
	\begin{lemma}
		\label{lem:trace_approx}
		Let $\rho_{k,m} = \tau_p p_k^{\mathrm{p}} \beta_{k,m} / \sigma^2$ denote the effective pilot SNR and $\kappa_{k,m} = \lambda_{\max}(\mathbf{R}_{k,m}) / \lambda_{\min}(\mathbf{R}_{k,m})$ the condition number of the spatial correlation matrix. For the no-pilot-contamination case ($\mathcal{P}_k = \{k\}$), the trace products satisfy
		\begin{align}
			\mathrm{tr}(\boldsymbol{\Phi}_{k,m} \mathbf{R}_{k,m}) &= \bar{\eta}_{k,m} \, \mathrm{tr}(\boldsymbol{\Phi}_{k,m}) + \mathcal{O}\left( \frac{(\kappa_{k,m} - 1)^2}{\rho_{k,m}} \right), \label{eq:trace_approx_self} \\
			\mathrm{tr}(\boldsymbol{\Phi}_{k,m} \mathbf{R}_{j,m}) &= \bar{\eta}_{j,m}^{(k)} \, \mathrm{tr}(\boldsymbol{\Phi}_{k,m}) + \mathcal{O}\left( \frac{\kappa_{k,m} - 1}{\rho_{k,m}} \right). \label{eq:trace_approx_cross}
		\end{align}
		The approximation error vanishes as $\rho_{k,m} \to \infty$ or $\kappa_{k,m} \to 1$.
	\end{lemma}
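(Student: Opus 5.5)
In the no-contamination case, $\boldsymbol{\Psi}_{k,m}=\tau_p p_k^{\mathrm{p}}\beta_{k,m}\mathbf{R}_{k,m}+\sigma^2\mathbf{I}_N$. This commutes with $\mathbf{R}_{k,m}$, so the whole argument can be carried out in the eigenbasis of $\mathbf{R}_{k,m}$. Write $\mathbf{R}_{k,m}=\mathbf{U}\boldsymbol{\Lambda}\mathbf{U}^\mathsf{H}$ with eigenvalues $\lambda_1,\dots,\lambda_N$. Then
\begin{equation*}
\boldsymbol{\Phi}_{k,m}=\beta_{k,m}\,\mathbf{U}\,\mathrm{diag}\!\left(\frac{\rho_{k,m}\lambda_n^2}{\rho_{k,m}\lambda_n+1}\right)\mathbf{U}^\mathsf{H},
\end{equation*}
with $\rho_{k,m}$ as defined in the statement. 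The idea is to expand each diagonal entry in the large-$\rho_{k,m}$ regime,
\begin{equation*}
\frac{\rho\lambda_n^2}{\rho\lambda_n+1}=\lambda_n-\frac{\lambda_n}{\rho\lambda_n+1}.
\end{equation*}
This gives $\boldsymbol{\Phi}_{k,m}=\beta_{k,m}\left(\mathbf{R}_{k,m}-\mathbf{E}\right)$, where $\mathbf{E}=\mathbf{U}\,\mathrm{diag}(\epsilon_n)\,\mathbf{U}^\mathsf{H}$ and $\epsilon_n=\lambda_n/(\rho\lambda_n+1)\in(0,1/\rho)$. The leading terms then reproduce the alignment factors exactly: $\mathrm{tr}(\mathbf{R}_{k,m}\mathbf{R}_{k,m})=\bar\eta_{k,m}\mathrm{tr}(\mathbf{R}_{k,m})$ and $\mathrm{tr}(\mathbf{R}_{k,m}\mathbf{R}_{j,m})=\bar\eta^{(k)}_{j,m}\mathrm{tr}(\mathbf{R}_{k,m})$ by Definition~\ref{def:alignment_factors}. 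So everything reduces to bounding residuals built from $\mathbf{E}$.

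For \eqref{eq:trace_approx_self}, the residual (divided by $\beta_{k,m}$) is
\begin{equation*}
\Delta_{\mathrm{s}}=-\mathrm{tr}(\mathbf{E}\mathbf{R}_{k,m})+\bar\eta_{k,m}\,\mathrm{tr}(\mathbf{E})=-\sum_n \epsilon_n(\lambda_n-\bar\lambda),\qquad \bar\lambda=\frac{\sum_n\lambda_n^2}{\sum_n\lambda_n}.
\end{equation*}
This is a covariance-type sum, so I will write $\epsilon_n=\bar\epsilon+(\epsilon_n-\bar\epsilon)$ for a suitable reference value $\bar\epsilon$. The constant part contributes $\bar\epsilon\sum_n(\lambda_n-\bar\lambda)$, which is itself $\mathcal{O}\big(\lambda_{\max}-\lambda_{\min}\big)$ and multiplied by $1/\rho$. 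For the fluctuating part, $\epsilon_n$ is Lipschitz in $\lambda_n$ with $|\epsilon_n-\epsilon_{n'}|\le|\lambda_n-\lambda_{n'}|/(1+\rho\lambda_{\min})^2\cdot$const, which yields a product of two spread terms. Normalizing by $\lambda_{\min}$ (the unit-diagonal model \eqref{eq:correlation_model} gives $\mathrm{tr}(\mathbf{R}_{k,m})=N$, so the eigenvalues are $\mathcal{O}(1)$ quantities) turns both spreads into multiples of $\kappa_{k,m}-1$. The aim is the $(\kappa_{k,m}-1)^2/\rho_{k,m}$ scaling. If the constant part does not combine to second order, I will instead choose $\bar\epsilon$ as the $\lambda$-weighted mean of the $\epsilon_n$, so that the first-order term cancels, and then bound the remainder by a variance-type inequality.

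For \eqref{eq:trace_approx_cross}, $\mathbf{R}_{j,m}$ is not diagonal in $\mathbf{U}$, so only a first-order bound is available. The residual is
\begin{equation*}
\Delta_{\mathrm{c}}=-\mathrm{tr}(\mathbf{E}\mathbf{R}_{j,m})+\bar\eta^{(k)}_{j,m}\,\mathrm{tr}(\mathbf{E}).
\end{equation*}
I will write $\mathbf{E}=\bar\epsilon\,\mathbf{I}+(\mathbf{E}-\bar\epsilon\mathbf{I})$. The identity part gives $\bar\epsilon\big(\mathrm{tr}(\mathbf{R}_{j,m})-N\bar\eta^{(k)}_{j,m}\big)$. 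The deviation part is bounded by $\|\mathbf{E}-\bar\epsilon\mathbf{I}\|_2\,\mathrm{tr}(\mathbf{R}_{j,m})$, and $\|\mathbf{E}-\bar\epsilon\mathbf{I}\|_2\le(\epsilon_{\max}-\epsilon_{\min})=\mathcal{O}((\kappa_{k,m}-1)/\rho_{k,m})$.

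\textbf{Main obstacle.} The main difficulty is the zeroth-order identity term in both estimates: $\mathrm{tr}(\mathbf{R}_{j,m})-N\bar\eta^{(k)}_{j,m}$ in the cross case, and its analogue in the self case. This term does not vanish as $\kappa_{k,m}\to1$ unless it is itself controlled by the spread of $\mathbf{R}_{k,m}$. For $\kappa_{k,m}=1$ with unit diagonal, $\mathbf{R}_{k,m}=\mathbf{I}$, and then $\bar\eta^{(k)}_{j,m}=\mathrm{tr}(\mathbf{R}_{j,m})/N$, so the term is zero. I will therefore argue by writing $\mathbf{R}_{k,m}=\mathbf{I}+\mathbf{D}$ with $\|\mathbf{D}\|_2\le\kappa_{k,m}-1$ (using unit diagonal, so $\lambda_{\min}\le1\le\lambda_{\max}$). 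This shows the term is $\mathcal{O}(\kappa_{k,m}-1)$, and multiplied by $\bar\epsilon\le1/\rho_{k,m}$ it gives the stated order. Upgrading this to the squared dependence in the self case is the delicate step; there I expect to rely on the symmetric cancellation $\sum_n(\lambda_n-1)=0$ implied by $\mathrm{tr}(\mathbf{R}_{k,m})=N$. Both residuals are $\mathcal{O}(1/\rho_{k,m})$ and vanish as $\kappa_{k,m}\to1$, which gives the concluding statement of the lemma.
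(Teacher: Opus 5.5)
Your plan is essentially the paper's argument: in the eigenbasis of $\mathbf{R}_{k,m}$, your self-residual $-\sum_n\epsilon_n(\lambda_n-\bar\eta_{k,m})$ is exactly the paper's numerator $D_1S_2-D_2S_1$ divided by $S_1$ (with $D_1=\sum_n\epsilon_n$, $D_2=\sum_n\lambda_n\epsilon_n$), and the cross term rests on the same inequality $\mathrm{tr}(\mathbf{A}\mathbf{B})\le\|\mathbf{B}\|_2\,\mathrm{tr}(\mathbf{A})$, with your unit-diagonal treatment of the identity term spelling out what the paper only calls analogous. The step you flagged as delicate closes in one line, because $\sum_n(\lambda_n-\bar\eta_{k,m})=(S_1^2-NS_2)/S_1=-\sum_n(\lambda_n-1)^2$ when $\mathrm{tr}(\mathbf{R}_{k,m})=N$, so your constant part is already $\mathcal{O}\big((\kappa_{k,m}-1)^2/\rho_{k,m}\big)$ for any $\bar\epsilon\in[\min_n\epsilon_n,\max_n\epsilon_n]$; the fallback of re-choosing $\bar\epsilon$ is unnecessary and could not cancel anything anyway, since $\bar\epsilon$ only multiplies that sum.
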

	
	\begin{IEEEproof}
		See Appendix~\ref{app:trace_approx}.
	\end{IEEEproof}
	
	The approximation is accurate when $\rho_{k,m}\gtrsim 10$~dB and
	$\kappa_{k,m}\lesssim 10$, conditions that are typically met in
	practical CF deployments with moderate angular spread ($\sigma_\theta\gtrsim 5^\circ$), as validated
	in Section~\ref{subsec:validation}. Under pilot contamination
	($|\mathcal{P}_k|>1$), $\rho_{k,m}$ is replaced by
	\begin{equation}
		\rho_{k,m}^{\mathrm{eff}} = \frac{\tau_p p_k^{\mathrm{p}} \beta_{k,m}}{\sum_{j \in \mathcal{P}_k \setminus \{k\}} \tau_p p_j^{\mathrm{p}} \beta_{j,m} + \sigma^2},
		\label{eq:rho_eff}
	\end{equation}
	and the bounds remain valid provided $\rho_{k,m}^{\mathrm{eff}}\gtrsim 10$~dB.
	
	Combining \eqref{eq:trace_approx_self}--\eqref{eq:trace_approx_cross} with \eqref{eq:nmse_explicit}, we express the estimate covariance trace as $\mathrm{tr}(\boldsymbol{\Phi}_{k,m}) = \beta_{k,m} \mathrm{tr}(\mathbf{R}_{k,m})(1 - \mathrm{NMSE}_{k,m})$. Substituting into the SINR components \eqref{eq:DS}--\eqref{eq:N} yields expressions where each term factors into position-dependent coefficients and NMSE-dependent scaling. To obtain a compact SINR-NMSE relationship suitable for deriving explicit QoS constraints, we introduce the following assumption.
	\vspace{-1mm}
	\begin{assumption}
		\label{ass:uniform_nmse}
		For each UE $k$, the per-AP NMSE values are approximately uniform across serving APs, i.e., $\mathrm{NMSE}_{k,m} \approx \overline{\mathrm{NMSE}}_k$ for all $m \in \mathcal{M}_k$.
	\end{assumption}
	
	{Assumption~\ref{ass:uniform_nmse} is introduced only in the reduced SINR mapping. The channel model, LMMSE covariance calculation, and per-AP NMSE evaluation retain the link-dependent large-scale fading and spatial correlation associated with every serving AP.} The following proposition characterizes when this assumption is accurate.
	\vspace{-1mm}
	\begin{proposition}
		\label{prop:nmse_variation}
		Consider a UC cluster $\mathcal{M}_k$ with large-scale fading coefficients satisfying $\beta_{k,m}\in[\beta_k^{\min},\beta_k^{\max}]$ for all $m\in\mathcal{M}_k$. Under uniform pilot power $p_k^{\mathrm p}$ and approximately identical spatial correlation matrices $\mathbf{R}_{k,m}\approx \mathbf{R}_k$ across serving APs, the NMSE variation satisfies
		\begin{equation}
			\max_{m,m'\in\mathcal{M}_k}
			\left|
			\mathrm{NMSE}_{k,m}-\mathrm{NMSE}_{k,m'}
			\right|
			\le \delta_k,
			\label{eq:nmse_variation_bound}
		\end{equation}
		where
		\begin{equation}
			\delta_k
			=
			\frac{\beta_k^{\max}-\beta_k^{\min}}{\beta_k^{\max}}
			\cdot
			\frac{\sigma^2}{\tau_p p_k^{\mathrm p}\beta_k^{\min}+\sigma^2}.
			\label{eq:delta_k}
		\end{equation}
	\end{proposition}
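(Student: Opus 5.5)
The plan is to reduce each per-AP NMSE to a scalar function of $\beta_{k,m}$ by working in the common eigenbasis of $\mathbf{R}_k$, and then to bound the variation of that function over $[\beta_k^{\min},\beta_k^{\max}]$ term by term, closing with Jensen's inequality. I would work in the setting the proposition implicitly addresses: $\mathbf{R}_{k,m}=\mathbf{R}_k$ for all $m\in\mathcal{M}_k$ and no co-pilot contribution in $\boldsymbol{\Psi}_{k,m}$.

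\textbf{Step 1 (diagonalization).} Write $\mathbf{R}_k=\mathbf{U}\boldsymbol{\Lambda}\mathbf{U}^\mathsf{H}$ with eigenvalues $\lambda_1,\dots,\lambda_N\ge 0$. By \eqref{eq:correlation_model}, every diagonal entry equals $1$, so $\mathrm{tr}(\mathbf{R}_k)=N$ and the eigenvalues have mean $1$. Set $a=\tau_p p_k^{\mathrm p}$. Then $\boldsymbol{\Psi}_{k,m}=a\beta_{k,m}\mathbf{R}_k+\sigma^2\mathbf{I}_N$ shares the eigenbasis $\mathbf{U}$. Substituting \eqref{eq:phi_matrix} into \eqref{eq:nmse_explicit} gives
\begin{equation*}
\mathrm{NMSE}_{k,m}=f(\beta_{k,m}),\qquad f(\beta)=\frac{1}{N}\sum_{i=1}^N \frac{\lambda_i\sigma^2}{a\beta\lambda_i+\sigma^2}.
\end{equation*}
Each summand is nonincreasing in $\beta$. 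Hence the maximum in \eqref{eq:nmse_variation_bound} is at most $f(\beta_k^{\min})-f(\beta_k^{\max})$.

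\textbf{Step 2 (termwise bound).} For each $i$, the difference of summands equals
\begin{equation*}
\frac{\lambda_i\sigma^2\, a\lambda_i(\beta_k^{\max}-\beta_k^{\min})}{(a\beta_k^{\min}\lambda_i+\sigma^2)(a\beta_k^{\max}\lambda_i+\sigma^2)}.
\end{equation*}
Using $a\lambda_i\beta_k^{\max}/(a\beta_k^{\max}\lambda_i+\sigma^2)\le 1$, this is at most
\begin{equation*}
\frac{\beta_k^{\max}-\beta_k^{\min}}{\beta_k^{\max}}\cdot \frac{\lambda_i\sigma^2}{a\beta_k^{\min}\lambda_i+\sigma^2}.
\end{equation*}

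\textbf{Step 3 (averaging via Jensen).} The difficulty is that $\sigma^2/(a\beta_k^{\min}\lambda_i+\sigma^2)$ alone exceeds the target factor whenever $\lambda_i<1$. The key is to keep the factor $\lambda_i$ that comes from the trace weighting. The function $h(\lambda)=\lambda\sigma^2/(a\beta_k^{\min}\lambda+\sigma^2)$ is concave on $\lambda\ge 0$. Jensen's inequality with mean eigenvalue $1$ then gives
\begin{equation*}
\frac{1}{N}\sum_i h(\lambda_i)\le h(1)=\frac{\sigma^2}{a\beta_k^{\min}+\sigma^2}.
\end{equation*}
This yields exactly $\delta_k$ in \eqref{eq:delta_k}. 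I expect this step to be the main obstacle, because a naive per-eigenvalue bound fails and one must exploit the unit-diagonal normalization of \eqref{eq:correlation_model} together with the concavity argument.

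\textbf{Approximation issues.} If $\mathbf{R}_{k,m}$ is only approximately equal to $\mathbf{R}_k$, the argument holds up to a perturbation term, which I would absorb into the stated "$\approx$". Under pilot contamination, $\boldsymbol{\Psi}_{k,m}$ is no longer codiagonalizable with $\mathbf{R}_k$ in general, and the proposition as stated does not cover that case. There I would only remark that, when the co-pilot correlations also match $\mathbf{R}_k$, the same computation goes through with $\sigma^2$ replaced by an effective noise level, consistent with \eqref{eq:rho_eff}.
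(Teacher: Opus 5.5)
Your proof is correct, and it reaches $\delta_k$ by a genuinely different route from the paper's appendix.

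The paper takes the following steps:
\begin{itemize}
\item it writes the per-AP NMSE as $\frac{1}{N}\sum_i (1+\rho_{k,m}\lambda_i)^{-1}$;
\item it applies the mean value theorem in the pilot SNR, with the derivative bounded at $\rho_{\min}=\tau_p p_k^{\mathrm p}\beta_k^{\min}/\sigma^2$;
\item it obtains the Lipschitz-type bound $\tau_p p_k^{\mathrm p}(\beta_k^{\max}-\beta_k^{\min})/(\sigma^2(1+\rho_{\min})^2)$ and states that this ``simplifies'' to \eqref{eq:delta_k}.
\end{itemize}

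You proceed differently:
\begin{itemize}
\item You keep the $\lambda_i$ weight that \eqref{eq:nmse_explicit} actually produces, $\mathrm{NMSE}_{k,m}=\frac{1}{N}\sum_i \lambda_i/(1+\rho_{k,m}\lambda_i)$. The appendix's expression is exact only for $\mathbf{R}_k=\mathbf{I}_N$.
\item You reduce to the two endpoints by monotonicity.
\item You compute the exact endpoint difference, so that $a\lambda_i\beta_k^{\max}/(a\lambda_i\beta_k^{\max}+\sigma^2)\le 1$ delivers the $1/\beta_k^{\max}$ normalization directly.
\item You absorb the eigenvalue spread through concavity of $\lambda\mapsto\lambda\sigma^2/(a\beta_k^{\min}\lambda+\sigma^2)$ together with $\mathrm{tr}(\mathbf{R}_k)=N$.
\end{itemize}

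The mean-value route is shorter and makes the dependence on the SNR gap and the weakest link transparent. However, it leaves two steps unsupported. First, the implicit averaging $\frac{1}{N}\sum_i\lambda_i(1+\rho_{\min}\lambda_i)^{-2}\le(1+\rho_{\min})^{-2}$ can fail for nonuniform spectra; for example, $N=2$, $\lambda=(1.9,0.1)$, $\rho_{\min}=10$ gives $0.0149>0.0083$. Second, the intermediate bound equals $\frac{\beta_k^{\max}}{\beta_k^{\min}}\frac{\rho_{\min}}{1+\rho_{\min}}\,\delta_k$, which is at most $\delta_k$ only when $\beta_k^{\max}/\beta_k^{\min}\le 1+1/\rho_{\min}$. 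Your argument establishes exactly the stated $\delta_k$ for an arbitrary unit-diagonal $\mathbf{R}_k$. The concavity/Jensen step is precisely the ingredient needed to control the spectrum.

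One caveat concerns only your closing remark, which lies outside the statement; the paper's proof likewise assumes $\mathcal{P}_k=\{k\}$. If all co-pilot covariances equal $\mathbf{R}_k$, the per-eigenvalue noise becomes $I_m\lambda_i+\sigma^2$, with $I_m=\sum_{j\in\mathcal{P}_k\setminus\{k\}}\tau_p p_j^{\mathrm p}\beta_{j,m}$. This noise depends on both the eigenvalue and the AP, so the contaminated case is not a simple substitution of a scalar effective noise level for $\sigma^2$.
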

	\begin{IEEEproof}
		See Appendix~\ref{app:nmse_variation}.
	\end{IEEEproof}
	
	Proposition~\ref{prop:nmse_variation} shows that $\delta_k$ is controlled by  the relative large-scale fading
	spread within the cluster (limited by UC selection) and the
	noise-to-signal ratio at the weakest AP (diminishing at moderate
	pilot SNR). The sensitivity analysis in Section~\ref{subsec:validation} reports median absolute SINR deviations of only $0.017$ dB, supporting the use of Assumption~\ref{ass:uniform_nmse} over the dominant practical regime.
	
	
	Under Assumption~\ref{ass:uniform_nmse}, defining
	$\alpha_k = 1-\mathrm{NMSE}_k$, the aggregate coefficients are
	\begin{align}
		\Gamma_k &= \sum_{m \in \mathcal{M}_k} \beta_{k,m} \, \mathrm{tr}(\mathbf{R}_{k,m}), \label{eq:Gamma} \\
		\Omega_k &= \sum_{m \in \mathcal{M}_k} \bar{\eta}_{k,m} \beta_{k,m}^2 \, \mathrm{tr}(\mathbf{R}_{k,m}), \label{eq:Omega} \\
		\Xi_{k,j} &= \sum_{m \in \mathcal{M}_k} \bar{\eta}_{j,m}^{(k)} \beta_{j,m} \beta_{k,m} \, \mathrm{tr}(\mathbf{R}_{k,m}), \label{eq:Xi}
	\end{align}
	and the SINR admits the reduced quadratic-fractional surrogate
	\begin{equation}
		\mathrm{SINR}_k \approx \frac{A_k \alpha_k^2}{B_k \alpha_k + C_k},
		\label{eq:sinr_quadratic}
	\end{equation}
	where $A_k = p_k^{\mathrm{u}}\Gamma_k^2$,\;
	$B_k = p_k^{\mathrm{u}}\Omega_k + \sum_{j\notin\mathcal{P}_k}p_j^{\mathrm{u}}\Xi_{k,j}+\sigma^2\Gamma_k$,\;
	$C_k = \mathrm{CI}_k$. The coherent interference $\mathrm{CI}_k$ in \eqref{eq:CI} depends implicitly on $\alpha_k$ through the estimate covariances $\Phi_{k,m}$ and $\Phi_{k,m}^{(j)}$. In the surrogate \eqref{eq:sinr_quadratic}, $C_k$ is evaluated pointwise from the current antenna positions and pilot powers, yielding a scalar coefficient used for local SINR prediction and threshold inversion. These coefficients are recomputed whenever the thresholds are updated (see Section~\ref{sec:proposed_solution}). 
	
	\vspace{-3mm}
	\subsection{NMSE Threshold for QoS Guarantee}
	\label{subsec:nmse_threshold}
	
	The quadratic SINR-NMSE relationship enables explicit characterization of the estimation quality required to meet URLLC QoS constraints.
	
	\begin{theorem}
		\label{thm:nmse_threshold}
		Under Assumption~\ref{ass:uniform_nmse}, the SINR constraint $\mathrm{SINR}_k \geq \bar{\gamma}_k$ is satisfied if and only if $\overline{\mathrm{NMSE}}_k \leq \varepsilon_k$, where
		\begin{equation}
			\varepsilon_k = 1 - \frac{\bar{\gamma}_k B_k + \sqrt{(\bar{\gamma}_k B_k)^2 + 4 A_k \bar{\gamma}_k C_k}}{2 A_k}.
			\label{eq:nmse_threshold}
		\end{equation}
	\end{theorem}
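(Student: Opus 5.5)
We will prove the equivalence by treating the surrogate \eqref{eq:sinr_quadratic} as the SINR, i.e., $\mathrm{SINR}_k = A_k\alpha_k^2/(B_k\alpha_k + C_k)$ with $\alpha_k = 1-\overline{\mathrm{NMSE}}_k$. Throughout, the coefficients $A_k, B_k, C_k$ are held fixed: they are evaluated pointwise at the current positions and powers, as stated after \eqref{eq:sinr_quadratic}. The first step is to record the signs of these coefficients. Here $A_k = p_k^{\mathrm u}\Gamma_k^2>0$, since $\Gamma_k>0$ because $\mathrm{tr}(\mathbf R_{k,m}) = N$ from the unit diagonal in \eqref{eq:correlation_model}. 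Next, $B_k>0$, because $\Omega_k,\Xi_{k,j}\ge 0$ (traces of products of PSD matrices are nonnegative) and $\sigma^2\Gamma_k>0$. Finally, $C_k = \mathrm{CI}_k \ge 0$, because each bracket in \eqref{eq:CI} is a variance-type quantity, namely the sum of a squared mean and a per-AP variance. We also note $\alpha_k\in(0,1]$, since the LMMSE estimate has $\mathrm{NMSE}_{k,m}\in[0,1)$.

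Since $B_k\alpha_k + C_k>0$ on this domain, the constraint $\mathrm{SINR}_k\ge\bar\gamma_k$ is equivalent to the polynomial inequality
\[
f(\alpha_k) \triangleq A_k\alpha_k^2 - \bar\gamma_k B_k\alpha_k - \bar\gamma_k C_k \ge 0 .
\]
The function $f$ is an upward-opening quadratic. Its discriminant $(\bar\gamma_k B_k)^2 + 4A_k\bar\gamma_k C_k$ is nonnegative, so it has real roots $\alpha^{\pm}$. By Vieta, their product is $-\bar\gamma_k C_k/A_k \le 0$, so $\alpha^-\le 0\le \alpha^+$, with
\[
\alpha^+ = \frac{\bar\gamma_k B_k + \sqrt{(\bar\gamma_k B_k)^2 + 4A_k\bar\gamma_k C_k}}{2A_k}.
\]
Hence on $\alpha_k>0$ we have $f(\alpha_k)\ge 0$ if and only if $\alpha_k\ge\alpha^+$. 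The degenerate case $C_k=0$ gives $\alpha^-=0$, and the conclusion still holds for $\alpha_k>0$. Substituting $\alpha_k = 1-\overline{\mathrm{NMSE}}_k$ turns $\alpha_k\ge\alpha^+$ into $\overline{\mathrm{NMSE}}_k\le 1-\alpha^+ = \varepsilon_k$, which is \eqref{eq:nmse_threshold}.

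An equivalent way to see the equivalence is through monotonicity. The derivative of $g(\alpha)=A_k\alpha^2/(B_k\alpha+C_k)$ is $A_k\alpha(B_k\alpha+2C_k)/(B_k\alpha+C_k)^2>0$ for $\alpha>0$, so $g$ is strictly increasing and the sublevel/superlevel correspondence is immediate.

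The main obstacle is not algebraic but interpretive. The claim ``if and only if'' holds only for the surrogate with frozen $C_k$, since the true $\mathrm{CI}_k$ depends on $\alpha_k$. We will therefore state explicitly that the equivalence refers to \eqref{eq:sinr_quadratic} with coefficients evaluated at the current operating point, as the paper does.

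A second subtlety is feasibility. If $\alpha^+>1$, then $\varepsilon_k<0$ and the constraint is infeasible. The statement remains vacuously consistent in that case, since no $\overline{\mathrm{NMSE}}_k\ge 0$ satisfies either side. This deserves a brief remark.
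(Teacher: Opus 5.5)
Your proposal is the paper's proof with the bookkeeping made explicit: clear the positive denominator, reduce to the upward-opening quadratic $A_k\alpha_k^2-\bar{\gamma}_kB_k\alpha_k-\bar{\gamma}_kC_k\ge 0$, use the root product $-\bar{\gamma}_kC_k/A_k\le 0$ to discard the nonpositive root, and translate $\alpha_k\ge\alpha^+$ into $\overline{\mathrm{NMSE}}_k\le\varepsilon_k$, so it is correct given $C_k\ge 0$, which the paper also uses tacitly. The one soft spot is your reason for $C_k\ge 0$: since $\widehat{\mathbf{h}}_{k,m}$ and $\mathbf{h}_{j,m}$ are jointly Gaussian, the variance of $\widehat{\mathbf{h}}_{k,m}^{\mathsf{H}}\mathbf{h}_{j,m}$ is $\beta_{j,m}\mathrm{tr}(\boldsymbol{\Phi}_{k,m}\mathbf{R}_{j,m})$ itself, so the per-AP difference in \eqref{eq:CI} is a variance minus a squared mean and can be negative (for overlapping co-pilot correlations the squared mean grows like $N^2$ and the variance like $N$); it is safer to rewrite each bracket as $\sum_{m\in\mathcal{M}_k}\beta_{j,m}\mathrm{tr}(\boldsymbol{\Phi}_{k,m}\mathbf{R}_{j,m})+\sum_{m\neq m'}\mathrm{Re}\bigl(\mathrm{tr}(\boldsymbol{\Phi}_{k,m}^{(j)})\,\overline{\mathrm{tr}(\boldsymbol{\Phi}_{k,m'}^{(j)})}\bigr)$, which is nonnegative when these traces share a common phase (e.g.\ are real and nonnegative), or simply to state $C_k\ge 0$ as a standing hypothesis.
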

	
	\begin{IEEEproof}
		Substituting $\alpha_k = 1 - \overline{\mathrm{NMSE}}_k$ into the SINR constraint $A_k \alpha_k^2 / (B_k \alpha_k + C_k) \geq \bar{\gamma}_k$ and rearranging yields the quadratic inequality $A_k \alpha_k^2 - \bar{\gamma}_k B_k \alpha_k - \bar{\gamma}_k C_k \geq 0$. Since $A_k > 0$, the parabola opens upward. Computing the roots and noting that the negative root is infeasible (as $\alpha_k > 0$), the constraint is satisfied for $\alpha_k \geq \alpha_+$ where $\alpha_+$ is the positive root. Converting back to NMSE gives \eqref{eq:nmse_threshold}.
	\end{IEEEproof}
	
	The threshold $\varepsilon_k$ depends on positions and powers through
	$A_k$, $B_k$, and $C_k$, so FA positioning and pilot power control can
	relax the estimation requirement. For orthogonal pilots
	($\mathcal{P}_k=\{k\}$, $C_k=0$), \eqref{eq:nmse_threshold} reduces to
	$\varepsilon_k = 1-\bar{\gamma}_k B_k/A_k$, directly linking stricter SINR
	targets to lower NMSE requirements.
	
	\vspace{-3mm}
	\subsection{Joint Optimization Problem}
	\label{subsec:optimization_problem}
	
	We formulate the joint optimization of fluid antenna positions and pilot power allocation to minimize average NMSE while guaranteeing URLLC QoS for all UEs. The optimization exploits statistical CSI, comprising large-scale fading coefficients $\{\beta_{k,m}\}$ and spatial correlation matrices $\{\mathbf{R}_{k,m}\}$, which vary on a slow timescale compared to small-scale fading. The joint optimization problem is formulated as follows: 
	
	\begin{subequations}
		\label{eq:opt_problem}
		\begin{align}
			\mathcal{P}: \min_{\mathbf{x}, \mathbf{p}^{\mathrm{p}}} \quad & \frac{1}{K} \sum_{k=1}^{K} \mathrm{NMSE}_k(\mathbf{x}, \mathbf{p}^{\mathrm{p}}) \label{eq:objective} \\
			\text{s.t.} \quad & \mathrm{NMSE}_k \leq \varepsilon_k(\mathbf{x},  \mathbf{p}^{\mathrm{p}}), \; \forall k, \label{eq:qos_constraint} \\
			& 0 \leq x_m(n) \leq W_m, \; \forall m, n, \label{eq:boundary} \\
			& x_m(n+1) - x_m(n) \geq d_{\min}, \; \forall m, n, \label{eq:spacing} \\
			& P_{\min}^{\mathrm{p}} \leq p_k^{\mathrm{p}} \leq P_{\max}^{\mathrm{p}}, \; \forall k. \label{eq:power_limits}
		\end{align}
	\end{subequations}
	
	The objective \eqref{eq:objective} minimizes the average effective NMSE across all UEs, which improves channel-estimation quality and, through the reduced SINR--NMSE relation in~\eqref{eq:sinr_quadratic}, tends to increase the effective SINR. Constraint~\eqref{eq:qos_constraint} enforces the NMSE threshold derived in  Theorem~\ref{thm:nmse_threshold}, thereby guaranteeing the target SINR required for URLLC delay satisfaction. Specifically, the achieved NMSE depends on the estimation covariance
	through $\boldsymbol{\Phi}_{k,m}$ and $\mathbf{R}_{k,m}$, while the
	threshold $\varepsilon_k$ depends on the reduced SINR coefficients
	$A_k$, $B_k$, and $C_k$. Constraints \eqref{eq:boundary} and \eqref{eq:spacing} enforce physical realizability with minimum spacing $d_{\min}$ and implicit element
	ordering, while constraint \eqref{eq:power_limits} limits pilot power within $[P_{\min}^{\mathrm{p}}, P_{\max}^{\mathrm{p}}]$ according to hardware and interference considerations.
	
	Problem~$\mathcal{P}$ is nonconvex because of the matrix inverse $\boldsymbol{\Psi}_{k,m}^{-1}(\mathbf{x}_m)$ in \eqref{eq:phi_matrix}, the transcendental dependence induced by the position-dependent correlation model in \eqref{eq:correlation_model}, and the implicit coupling in \eqref{eq:qos_constraint} between the
	achieved NMSE and the variable-dependent threshold. Similar coupled
	resource-control formulations are known to reduce to hard
	sum-of-ratios programs~\cite{freund2001}, motivating the AO framework developed next.
	
	\vspace{-3mm}
	\section{Proposed Solution}
	\label{sec:proposed_solution}
	
	Given the nonconvexity of Problem~$\mathcal{P}$, we adopt an AO approach that 
	cycles between antenna position optimization and pilot power 
	allocation. Since each subproblem reduces the shared objective 
	over its own variables, the overall objective is non-increasing 
	across iterations, guaranteeing convergence to a stationary point.
	
	\vspace{-4mm}
	\subsection{Antenna Position Optimization}
	\label{subsec:position_opt}
	
	Given fixed pilot powers $\mathbf{p}^{\mathrm{p}}$, the antenna position optimization subproblem is
	\begin{equation}
		\mathcal{P}_{\mathbf{x}}: \min_{\mathbf{x}} \; \frac{1}{K} \sum_{k=1}^{K} \mathrm{NMSE}_k(\mathbf{x})
		\quad \text{s.t.} \; \eqref{eq:qos_constraint}, \eqref{eq:boundary}, \eqref{eq:spacing},
		\label{eq:subproblem_x}
	\end{equation}
	where the dependence of both the achieved NMSE and threshold $\varepsilon_k$ on antenna positions enters through \eqref{eq:nmse_effective} and \eqref{eq:nmse_threshold}, respectively. Subproblem $\mathcal{P}_{\mathbf{x}}$ is a continuous nonconvex optimization problem because both the NMSE and the threshold depend nonlinearly on the position-dependent covariance matrices through \eqref{eq:phi_matrix} and \eqref{eq:correlation_model}. Additionally, the QoS constraint \eqref{eq:qos_constraint} couples the achieved NMSE with the position-dependent threshold, creating an implicit constraint structure.
	
	Despite nonconvexity, constraints \eqref{eq:boundary} and \eqref{eq:spacing} jointly define the convex polytope
	\begin{equation}
		\mathcal{X}_m = \left\{ \mathbf{x}_m \in \mathbb{R}^N:\mathbf{0} \leq \mathbf{x}_m \leq W_m \mathbf{1}, \;  \mathbf{D} \mathbf{x}_m \geq d_{\min} \mathbf{1} \right\},
		\label{eq:feasible_set}
	\end{equation}
	where $\mathbf{D} \in \mathbb{R}^{(N-1) \times N}$ is the first-order differencing matrix with entries $[\mathbf{D}]_{i,i} = -1$ and $[\mathbf{D}]_{i,i+1} = 1$. The convexity of $\mathcal{X}_m$ admits efficient Euclidean projection. We exploit this structure via SCA, which iteratively constructs strongly convex local surrogates over the feasible polytope while updating the QoS thresholds according to the current antenna configuration. The QoS constraint is incorporated through the penalized objective
	\begin{equation}
		\begin{aligned}
			\mathcal{L}(\mathbf{x}; \mu)
			&= \frac{1}{K} \sum_{k=1}^{K} \mathrm{NMSE}_k(\mathbf{x}) \\
			&\quad + \mu \sum_{k=1}^{K}
			\left[\max\!\left(0,\ \mathrm{NMSE}_k(\mathbf{x}) - \varepsilon_k(\mathbf{x})\right)\right]^2,
		\end{aligned}
		\label{eq:augmented_lagrangian}
	\end{equation}
	where $\mu > 0$ is the penalty parameter. The quadratic penalty progressively enforces constraint satisfaction as $\mu$ increases while preserving a smooth dependence on the violation magnitude away from the hinge point. To maintain tractability within SCA, the QoS threshold is evaluated at the current iterate and held fixed during the local first-order step. Specifically, at iteration $t$, we define
	$\varepsilon_k^{(t)} \triangleq \varepsilon_k(\mathbf{x}^{(t)})$ for all
	$k$, and denote by $\mathcal{L}^{(t)}(\mathbf{x};\mu)$ the local
	frozen-threshold objective obtained from $\mathcal{L}(\mathbf{x};\mu)$
	by replacing $\varepsilon_k(\mathbf{x})$ with $\varepsilon_k^{(t)}$. {The reduced SINR coefficients $\{A_k,B_k,C_k\}$ in 
		\eqref{eq:sinr_quadratic} are recomputed together with 
		$\varepsilon_k^{(t)}$ using the current position iterate 
		$\mathbf{x}^{(t)}$ and the fixed pilot power vector, and are then 
		held fixed during the subsequent frozen-threshold SCA step.}  SCA then linearizes $\mathcal{L}^{(t)}$ around the current iterate $\mathbf{x}^{(t)}$ and adds proximal regularization:
	\begin{align}
		\tilde{\mathcal{L}}^{(t)}(\mathbf{x}; \mathbf{x}^{(t)})
		&=
		\mathcal{L}^{(t)}(\mathbf{x}^{(t)};\mu)
		+
		\nabla \mathcal{L}^{(t)}(\mathbf{x}^{(t)};\mu)^\mathsf{T}
		(\mathbf{x} - \mathbf{x}^{(t)}) \nonumber \\
		&+
		\frac{\rho}{2}\| \mathbf{x} - \mathbf{x}^{(t)} \|^2,
		\label{eq:quadratic_surrogate}
	\end{align}
	where $\rho > 0$ governs the trust-region radius. The proximal term ensures strong convexity of the local model and moderates the step size to maintain stable descent.  Minimizing \eqref{eq:quadratic_surrogate} subject to $\mathbf{x}_m \in \mathcal{X}_m$ admits the closed-form projected update
	\begin{equation}
		\mathbf{x}^{(t+1)}_m
		=
		\mathrm{proj}_{\mathcal{X}_m}
		\left(
		\mathbf{x}^{(t)}_m
		-
		\rho^{-1}\nabla_{\mathbf{x}_m}\mathcal{L}^{(t)}(\mathbf{x}^{(t)};\mu)
		\right),
		\ \forall m,
		\label{eq:sca_update}
	\end{equation}
	where $\mathrm{proj}_{\mathcal{X}_m}(\cdot)$ denotes Euclidean projection onto $\mathcal{X}_m$. The search direction is therefore computed from the local frozen-threshold objective $\mathcal{L}^{(t)}$, whereas the candidate point is accepted only if the true penalized objective $\mathcal{L}(\cdot;\mu)$ decreases.
	
	The gradient $\nabla_{\mathbf{x}_m}\mathcal{L}^{(t)}$ aggregates contributions from both the NMSE objective and the local penalty term. Defining the current QoS violation under the frozen threshold as $g_k^{(t)}
	\triangleq
	\max\!\left(
	0,\ \mathrm{NMSE}_k(\mathbf{x}^{(t)}) - \varepsilon_k^{(t)}
	\right)$, the partial derivative with respect to position $x_m(n)$ is
	\begin{equation}
		\frac{\partial \mathcal{L}^{(t)}}{\partial x_m(n)}
		=
		\frac{1}{K}
		\sum_{k \in \mathcal{K}_m}
		\left( 1 + 2K\mu g_k^{(t)} \right)
		\frac{\partial \mathrm{NMSE}_k}{\partial x_m(n)},
		\label{eq:gradient_expression}
	\end{equation}
	where $\mathcal{K}_m$ denotes the UEs served by AP $m$. The factor $\left(1 + 2K\mu g_k^{(t)}\right)$ increases the gradient weight of UEs with larger QoS violations, so infeasible UEs exert stronger influence on the position update, thereby steering the iterates toward feasibility. The NMSE gradient is obtained by differentiating \eqref{eq:nmse_effective}:
	\begin{align}
		\frac{\partial \mathrm{NMSE}_k}{\partial x_m(n)}
		&= -\frac{1}{\beta_{k,m}\,\mathrm{tr}\!\left(\mathbf{R}_{k,m}\right)}
		\Biggl(
		\frac{\partial\,\mathrm{tr}\!\left(\boldsymbol{\Phi}_{k,m}\right)}{\partial x_m(n)}
		\nonumber\\
		&\qquad\quad
		- \mathrm{NMSE}_{k,m}\,\beta_{k,m}\,
		\frac{\partial\,\mathrm{tr}\!\left(\mathbf{R}_{k,m}\right)}{\partial x_m(n)}
		\Biggr),
		\label{eq:nmse_gradient}
	\end{align}
	where the correlation matrix gradient $\partial \mathbf{R}_{k,m} / \partial x_m(n)$ follows from \eqref{eq:correlation_model}.
	
	\begin{algorithm}[t]
		\small
		\caption{SCA for Antenna Position Optimization}
		\label{alg:sca_position}
		\SetAlgoNoEnd
		\KwIn{Initial $\mathbf{x}^{(0)}$, $\rho_0$, $\mu_0$, $\epsilon$, $I_{\max}=100$, $\delta=10^{-3}$}
		\KwOut{Optimized positions $\mathbf{x}^\star$}
		\textbf{Initialize:} $\mathbf{x}^{(0)}_m$ at feasible region centroid, $\forall m$ \nllabel{line:init}\;
		Set $\rho \leftarrow \rho_0$, $\mu \leftarrow \mu_0$ \nllabel{line:param}\;
		\For{$t = 0, 1, \ldots, I_{\max}-1$}{ \nllabel{line:loop}
			Compute $\mathbf{R}_{k,m}(\mathbf{x}^{(t)})$, $\boldsymbol{\Phi}_{k,m}(\mathbf{x}^{(t)})$, $\forall k,m$ \nllabel{line:cov}\;
			{Evaluate $\mathrm{NMSE}_k(\mathbf{x}^{(t)})$, set 
				$\varepsilon_k^{(t)}=\varepsilon_k(\mathbf{x}^{(t)})$, and recompute 
				the reduced SINR coefficients $\{A_k,B_k,C_k\}$ using 
				$\mathbf{x}^{(t)}$ and the fixed pilot powers, $\forall k$} 
			\nllabel{line:nmse}\;
			Compute $\nabla_{\mathbf{x}_m}\mathcal{L}^{(t)}(\mathbf{x}^{(t)};\mu)$ via \eqref{eq:gradient_expression}--\eqref{eq:nmse_gradient}, treating $\varepsilon_k^{(t)}$ as fixed, $\forall m$ \nllabel{line:grad}\;
			$\mathbf{x}^{(t+1)}_m \leftarrow \mathrm{proj}_{\mathcal{X}_m}\!\left( \mathbf{x}^{(t)}_m - \rho^{-1} \nabla_{\mathbf{x}_m}\mathcal{L}^{(t)} \right)$, $\forall m$ \nllabel{line:update}\;
			Evaluate $\mathcal{L}(\mathbf{x}^{(t+1)};\mu)$ via \eqref{eq:augmented_lagrangian} \nllabel{line:evalnext}\;
			\lIf{$\mathcal{L}(\mathbf{x}^{(t+1)};\mu) < \mathcal{L}(\mathbf{x}^{(t)};\mu)$}{$\rho \leftarrow \max(\rho_{\min}, \alpha_{\mathrm{dec}}\rho)$} \nllabel{line:accept}
			\lElse{$\rho \leftarrow \min(\rho_{\max}, \alpha_{\mathrm{inc}}\rho)$; $\mathbf{x}^{(t+1)} \leftarrow \mathbf{x}^{(t)}$} \nllabel{line:reject}
			\If{$\max_k g_k^{(t)} > \delta$ \textbf{and} $t \bmod T_\mu = 0$}{ \nllabel{line:mu_check}
				$\mu \leftarrow \min(\mu_{\max}, c_{\mu} \mu)$ \nllabel{line:mu_update}\;
			}
			\lIf{$\| \mathbf{x}^{(t+1)} - \mathbf{x}^{(t)} \| < \epsilon$}{\textbf{break}} \nllabel{line:stop}
		}
		\Return $\mathbf{x}^\star \leftarrow \mathbf{x}^{(t+1)}$ \nllabel{line:return}
	\end{algorithm}
	
	Algorithm~\ref{alg:sca_position} summarizes the procedure. Lines~\ref{line:cov}--\ref{line:update} compute the projected SCA step using the local
	frozen-threshold objective $\mathcal{L}^{(t)}$, whereas Lines~\ref{line:evalnext}--\ref{line:reject} accept the candidate only if the true penalized objective
	$\mathcal{L}(\mathbf{x};\mu)$ in \eqref{eq:augmented_lagrangian} decreases. Projection onto $\mathcal{X}_m$ exploits isotonic regression structure for $\mathcal{O}(N)$ complexity\footnote{Positions satisfying
		$x_m(1)\leq\cdots\leq x_m(N)$ with minimum gaps reduce projection to
		the pool-adjacent-violators algorithm followed by boundary clipping,
		achieving $O(N)$ versus $O(N^3)$ for general quadratic programming.}. Lines~\ref{line:accept}--\ref{line:reject} moderate the local step size: successful steps reduce $\rho$ to accelerate progress, while unsuccessful steps increase $\rho$ and reject the candidate. The penalty parameter $\mu$ is periodically increased in Lines~\ref{line:mu_check}--\ref{line:mu_update} when QoS violations persist. Here, $\rho_{\min}$ and $\rho_{\max}$ bound the adaptive regularization, $\alpha_{\mathrm{dec}}$ and $\alpha_{\mathrm{inc}}$ are contraction/expansion factors, $T_\mu$ is the penalty-update period, and $c_\mu>1$ is the penalty-scaling factor.
	
	\vspace{-1mm}
	\begin{proposition}
		\label{prop:sca_convergence}
		Let $\{\mathbf{x}^{(t)}\}$ denote the sequence of accepted iterates
		generated by Algorithm~\ref{alg:sca_position}. For fixed~$\mu$, the
		sequence $\{\mathcal{L}(\mathbf{x}^{(t)};\mu)\}$ is monotonically
		non-increasing. Moreover, if $\{\mathbf{x}^{(t)}\}$ is asymptotically
		regular, every accumulation point $\mathbf{x}^\star$ is a stationary
		point of the limiting frozen-threshold objective with
		$\varepsilon_k=\varepsilon_k(\mathbf{x}^\star)$ for all~$k$.
	\end{proposition}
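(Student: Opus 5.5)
Monotonicity follows directly from the acceptance rule in Lines~\ref{line:evalnext}--\ref{line:reject} of Algorithm~\ref{alg:sca_position}. For fixed $\mu$, a candidate $\mathbf{x}^{(t+1)}$ is kept only if $\mathcal{L}(\mathbf{x}^{(t+1)};\mu)<\mathcal{L}(\mathbf{x}^{(t)};\mu)$. Otherwise the iterate is reset to $\mathbf{x}^{(t)}$, so $\mathcal{L}$ is unchanged. Hence $\{\mathcal{L}(\mathbf{x}^{(t)};\mu)\}$ is non-increasing along accepted iterates. It is also bounded below by $0$, since $\mathrm{NMSE}_k\in[0,1]$ by \eqref{eq:nmse_explicit}--\eqref{eq:nmse_effective} and the penalty is nonnegative. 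Therefore it converges.

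For stationarity, I first establish regularity of $\mathcal{L}^{(t)}$. The map $\mathbf{x}_m\mapsto\mathbf{R}_{k,m}(\mathbf{x}_m)$ in \eqref{eq:correlation_model} is real-analytic. The matrix $\boldsymbol{\Psi}_{k,m}\succeq\sigma^2\mathbf{I}_N$ is uniformly invertible, and $\mathrm{tr}(\mathbf{R}_{k,m})=N$ is constant. Consequently $\mathrm{NMSE}_k$ is $C^\infty$ on the compact set $\prod_m\mathcal{X}_m$, with a Lipschitz gradient of some constant $L_\nabla$. The squared hinge $[\max(0,\cdot)]^2$ is $C^1$ with a Lipschitz derivative. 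So for frozen thresholds, $\mathcal{L}^{(t)}$ is $C^1$ with gradient \eqref{eq:gradient_expression}. Its gradient is also jointly continuous in $(\mathbf{x},\boldsymbol{\varepsilon})$. In addition, $\varepsilon_k(\mathbf{x})$ is continuous in $\mathbf{x}$ through \eqref{eq:nmse_threshold}, since $A_k>0$ and the coefficients $A_k,B_k,C_k$ are continuous in the covariances. Therefore $\varepsilon_k^{(t)}\to\varepsilon_k(\mathbf{x}^\star)$ along any subsequence $\mathbf{x}^{(t)}\to\mathbf{x}^\star$.

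Next I pass to the limit in the projected update \eqref{eq:sca_update}. Take a subsequence $\mathbf{x}^{(t_i)}\to\mathbf{x}^\star$. Asymptotic regularity gives $\|\mathbf{x}^{(t_i+1)}-\mathbf{x}^{(t_i)}\|\to0$. Since $\rho\in[\rho_{\min},\rho_{\max}]$, I can further extract a subsequence with $\rho^{(t_i)}\to\bar\rho>0$. The projection $\mathrm{proj}_{\mathcal{X}_m}$ is continuous because $\mathcal{X}_m$ is a closed convex polytope. Passing to the limit yields
\begin{equation*}
\mathbf{x}^\star_m=\mathrm{proj}_{\mathcal{X}_m}\!\big(\mathbf{x}^\star_m-\bar\rho^{-1}\nabla_{\mathbf{x}_m}\mathcal{L}^{\star}(\mathbf{x}^\star;\mu)\big),\quad\forall m,
\end{equation*}
where $\mathcal{L}^\star$ is the frozen-threshold objective with $\varepsilon_k=\varepsilon_k(\mathbf{x}^\star)$. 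By the projection characterization this is equivalent to $\nabla\mathcal{L}^\star(\mathbf{x}^\star)^\mathsf{T}(\mathbf{x}-\mathbf{x}^\star)\ge0$ for all $\mathbf{x}\in\prod_m\mathcal{X}_m$. That inequality is the claimed stationarity.

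The main obstacle is the rejection steps. Asymptotic regularity concerns the accepted sequence, but the limit identity above needs the fixed-point relation at the computed candidate, which may have been rejected. My plan is to argue that after finitely many consecutive rejections $\rho$ reaches $\rho_{\max}$. I would choose $\rho_{\max}$ larger than the Lipschitz constant of $\nabla\mathcal{L}^{(t)}$ plus a margin that absorbs the change from frozen to true thresholds. By the descent lemma, any projected step that is nonzero at such $\rho$ is then accepted. So asymptotic regularity of the accepted iterates forces the candidate steps along the subsequence to vanish as well, and the limit argument applies to the candidates.

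The frozen-versus-true mismatch is the delicate point. It is controlled because the violation term differs by $|\varepsilon_k(\mathbf{x})-\varepsilon_k^{(t)}|=\mathcal{O}(\|\mathbf{x}-\mathbf{x}^{(t)}\|)$ by the continuity established above. If this domination fails, I would fall back on asymptotic regularity as a direct hypothesis on the candidate steps, which is how the statement is phrased.
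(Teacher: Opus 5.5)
Your core argument is the paper's proof. Both derive monotonicity and convergence from the acceptance test and boundedness, obtain continuity of $\varepsilon_k$ from \eqref{eq:nmse_threshold}, and use asymptotic regularity with $\rho\le\rho_{\max}$ to make the projected-gradient mapping of the frozen-threshold model vanish, which in the limit gives the variational inequality. You are simply more explicit about joint continuity in $(\mathbf{x},\boldsymbol{\varepsilon})$ and about extracting a convergent subsequence of $\rho$.

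The step that would fail is your descent-lemma repair for rejected candidates. Your own estimate $|\varepsilon_k(\mathbf{x})-\varepsilon_k^{(t)}|=\mathcal{O}(\|\mathbf{x}-\mathbf{x}^{(t)}\|)$ shows that the frozen and true penalized objectives differ to first order: $\nabla\mathcal{L}(\mathbf{x}^{(t)};\mu)=\nabla\mathcal{L}^{(t)}(\mathbf{x}^{(t)};\mu)-2\mu\sum_k g_k^{(t)}\nabla\varepsilon_k(\mathbf{x}^{(t)})$. So whenever some $g_k^{(t)}>0$, the frozen projected step can be an ascent direction for $\mathcal{L}(\cdot;\mu)$. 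Enlarging $\rho$ only shrinks that step without changing the sign of its first-order effect, so no choice of $\rho_{\max}$ forces acceptance. Worse, if rejected steps are counted as repeated iterates, the algorithm can stall at a non-stationary point with trivially vanishing increments. Under that reading the statement is false, so the repair cannot succeed.

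What makes the proof work is your fallback, and it is also what the paper's one-line appeal to $\rho\le\rho_{\max}$ tacitly uses. Read the sequence as indexed by acceptances. Then every consecutive pair satisfies $\mathbf{x}^{(t+1)}_m=\mathrm{proj}_{\mathcal{X}_m}\bigl(\mathbf{x}^{(t)}_m-\rho_t^{-1}\nabla_{\mathbf{x}_m}\mathcal{L}^{(t)}(\mathbf{x}^{(t)};\mu)\bigr)$ with $\rho_t\in[\rho_{\min},\rho_{\max}]$, and asymptotic regularity applies directly to these computed steps. Replace the descent-lemma paragraph with this reading. State it explicitly, together with the standing assumption that infinitely many candidates are accepted.
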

	\vspace{-2mm}
	\begin{proof}
		By Line~\ref{line:accept}, $\mathcal{L}(\mathbf{x}^{(t+1)};\mu)
		<\mathcal{L}(\mathbf{x}^{(t)};\mu)$ for every accepted iterate, so
		the sequence is non-increasing and, being bounded below on compact
		$\prod_m\mathcal{X}_m$, convergent. Along any subsequence
		$\mathbf{x}^{(t_j)}\!\to\!\mathbf{x}^\star$, asymptotic regularity
		gives $\|\mathbf{x}^{(t_j+1)}\!-\!\mathbf{x}^{(t_j)}\|\!\to\!0$,
		continuity of~\eqref{eq:nmse_threshold} yields
		$\varepsilon_k^{(t_j)}\!\to\!\varepsilon_k(\mathbf{x}^\star)$ for
		all~$k$, and $\rho\!\leq\!\rho_{\max}$ ensures the projected-gradient
		mapping of the local frozen-threshold model vanishes. Hence, $\mathbf{x}^\star$ satisfies the variational inequality of the limiting frozen-threshold objective and is therefore a stationary point of that objective.
	\end{proof}
	
	
	
	\subsubsection{Computational Complexity}
	\label{para:comput_SCA}
	The per-iteration cost is governed by gradient evaluation. Computing $\partial \mathbf{R}_{k,m} / \partial x_m(n)$ requires $\mathcal{O}(N^2)$ operations per UE-AP pair, while LMMSE covariance computation involves $\mathcal{O}(N^3)$ matrix inversion. Aggregating across $K$ UEs, $M$ APs, and $N$ elements yields per-iteration complexity $\mathcal{O}(KMN^3)$. With at most $I_{\max}$ iterations, total complexity becomes $\mathcal{O}(I_{\max} KMN^3)$, rendering Algorithm~\ref{alg:sca_position} suitable for large-scale CF deployments.
	
	
	\vspace{-4mm}
	\subsection{Pilot Power Allocation}\label{sec:bcd_pilot}
	
	With antenna positions $\{\mathbf{x}_m^\star\}_{m=1}^M$ fixed from Algorithm~\ref{alg:sca_position}, we optimize the pilot power vector $\mathbf{p}^{\mathrm{p}} = [p_1^{\mathrm{p}},\ldots,p_K^{\mathrm{p}}]^\mathsf{T}$ through a regularized formulation that balances estimation
	quality against energy consumption:
	\begin{subequations}\label{eq:pilot_power_sub}
		\begin{align}
			\min_{\mathbf{p}^{\mathrm{p}}} \quad
			& \frac{1}{K}\sum_{k=1}^{K} \mathrm{NMSE}_k(\mathbf{p}^{\mathrm{p}})
			+ \frac{\omega}{K P_{\max}^{\mathrm{p}}}
			\sum_{k=1}^{K} p_k^{\mathrm{p}}
			\label{eq:pilot_obj}\\
			\mathrm{s.t.} \quad
			& \eqref{eq:qos_constraint},\; \eqref{eq:power_limits},
		\end{align}
	\end{subequations}
	where $\omega > 0$ is a weighting factor. The power 
	regularization term in~\eqref{eq:pilot_obj} provides a systematic preference for lower pilot powers, avoiding
	convergence to full-power solutions when pilot contamination is weak and promoting energy-efficient operation in contamination-saturated regimes.
	
	To formalize the dependency on $\mathbf{p}^{\mathrm{p}}$, 
	we write $\mathrm{NMSE}_k(\mathbf{p}^{\mathrm{p}}) = 
	1 - \Upsilon_k(\mathbf{p}^{\mathrm{p}})/\Gamma_k$, where 
	$\Gamma_k$ is defined 
	in~\eqref{eq:Gamma}, and
	\begin{equation}\label{eq:Upsilon_k}
		\Upsilon_k(\mathbf{p}^{\mathrm{p}})
		= \sum_{m \in \mathcal{M}_k} \tau_p\, p_k^{\mathrm{p}}\, 
		\beta_{k,m}^2\,
		\mathrm{tr}\!\left(\mathbf{R}_{k,m}
		\boldsymbol{\Psi}_{k,m}^{-1}\mathbf{R}_{k,m}\right).
	\end{equation}
	The coupling arises from 
	$\boldsymbol{\Psi}_{k,m}$ in \eqref{eq:psi_matrix}, where $p_k^{\mathrm{p}}$ enters both the numerator scaling of~\eqref{eq:Upsilon_k} and the denominator inverse through $\boldsymbol{\Psi}_{k,m}$. This structure motivates the penalty-based BCD formulation developed next.
	
	To manage the nonconvexity and QoS constraints, we employ a penalty-based BCD framework. Let $\mathcal{L}(\mathbf{p}^{\mathrm{p}};\mu)$ denote the true penalized objective obtained by augmenting \eqref{eq:pilot_obj} with the squared hinge penalties $\mu\sum_{k=1}^{K}[\max(0,\mathrm{NMSE}_k(\mathbf{p}^{\mathrm{p}})-\varepsilon_k(\mathbf{p}^{\mathrm{p}}))]^2$, where $\mu>0$ is the penalty parameter. At outer iteration $t$, we update the QoS thresholds according to the current power vector, $\varepsilon_k^{(t)} \triangleq \varepsilon_k(\mathbf{p}^{\mathrm{p},(t)})$, $\forall k$, and define the corresponding frozen-threshold penalized objective
	\begin{equation}
		\begin{aligned}
			\mathcal{L}^{(t)}(\mathbf{p}^{\mathrm{p}};\mu)
			& \triangleq
			\frac{1}{K}\sum_{k=1}^{K}\mathrm{NMSE}_k(\mathbf{p}^{\mathrm{p}})
			+
			\frac{\omega}{K P_{\max}^{\mathrm{p}}}\sum_{k=1}^{K} p_k^{\mathrm{p}}\\
			&+
			\mu\sum_{k=1}^{K}
			\left[\max\!\left(0,\,
			\mathrm{NMSE}_k(\mathbf{p}^{\mathrm{p}})
			-\varepsilon_k^{(t)}\right)\right]^2.
		\end{aligned}
	\end{equation}
	
	{
		Along with the thresholds, the reduced SINR coefficients 
		$\{A_k,B_k,C_k\}$ in \eqref{eq:sinr_quadratic} are recomputed from 
		the current pilot power vector $\mathbf{p}^{\mathrm{p},(t)}$ and the 
		fixed antenna positions at the beginning of each outer BCD iteration, 
		and are held fixed during the subsequent inner BCD sweep.} To obtain analytically tractable scalar BCD updates, we further construct a violation-weighted surrogate by freezing the current penalty slopes. Defining $g_k^{(t)}
	\triangleq
	\max\!\left(
	0,\,
	\mathrm{NMSE}_k(\mathbf{p}^{\mathrm{p},(t)})
	-\varepsilon_k^{(t)}
	\right),
	\ \forall k$, the surrogate objective used within the inner BCD sweep is
	\begin{equation}
		\widehat{\mathcal{L}}^{(t)}(\mathbf{p}^{\mathrm{p}};\mu)
		\triangleq
		\sum_{k=1}^{K} w_k^{(t)}\,\mathrm{NMSE}_k(\mathbf{p}^{\mathrm{p}})
		+
		\frac{\omega}{K P_{\max}^{\mathrm{p}}}\sum_{k=1}^{K} p_k^{\mathrm{p}},
	\end{equation}
	where $w_k^{(t)} \triangleq \frac{1}{K} + 2\mu g_k^{(t)}, \ \forall k$.
	
	This construction mirrors the local frozen-threshold treatment used in the position-optimization stage: the thresholds and violation weights are updated at the beginning of each outer iteration and then held fixed during the subsequent BCD sweep.
	
	\subsubsection{BCD Decomposition and Derivatives}
	
	The BCD strategy leverages the pilot-contamination structure whereby modifying $p_l^{\mathrm{p}}$ affects the NMSE only of UEs in the co-pilot set $\mathcal{P}_l$. With the weights $\{w_k^{(t)}\}$ fixed over the current outer iteration, the scalar subproblem for UE~$l$ is obtained by restricting $\widehat{\mathcal{L}}^{(t)}$ to the terms that depend on $p_l^{\mathrm{p}}$, namely
	\begin{equation}\label{eq:bcd_sub}
		\min_{p_l^{\mathrm{p}} \in 
			[P_{\min}^{\mathrm{p}}, P_{\max}^{\mathrm{p}}]} \;
		f_l^{(t)}(p_l^{\mathrm{p}}) \triangleq
		\sum_{k \in \mathcal{P}_l} w_k^{(t)}\,
		\mathrm{NMSE}_k(p_l^{\mathrm{p}})
		+ \frac{\omega}{K P_{\max}^{\mathrm{p}}}\, 
		p_l^{\mathrm{p}}.
	\end{equation}
	
	Thus, UEs with larger current QoS violations receive greater weight during the inner BCD sweep, which biases the updates toward restoring feasibility while retaining a scalar closed-form derivative structure.
	
	To enable a second-order solution, we derive the exact 
	derivatives of~\eqref{eq:bcd_sub}. Let 
	$\mathbf{V}_{k,m} \triangleq 
	\boldsymbol{\Psi}_{k,m}^{-1}\mathbf{R}_{k,m}$ and 
	define the auxiliary trace quantities 
	$A_n^{(l,m)} \triangleq 
	\mathrm{tr}(\mathbf{R}_{l,m} \mathbf{V}_{l,m}^n)$ for 
	$n \in \{1,2,3\}$, 
	$B_1^{(k,l,m)} \triangleq 
	\|\mathbf{R}_{l,m}^{1/2}\mathbf{V}_{k,m}\|_F^2$, and 
	$B_2^{(k,l,m)} \triangleq 
	\|\boldsymbol{\Psi}_{k,m}^{-1/2}\mathbf{R}_{l,m}
	\mathbf{V}_{k,m}\|_F^2$.
	\vspace{-2mm}
	\begin{lemma}\label{lem:grad_hess}
		The first and second derivatives of 
		$f_l^{(t)}(p_l^{\mathrm{p}})$ are
		\begin{align}
			\bigl(f_l^{(t)}\bigr)'(p_l^{\mathrm{p}}) &= g_l^{\mathrm{self}} 
			+ g_l^{\mathrm{cross}} 
			+ \frac{\omega}{K P_{\max}^{\mathrm{p}}}, 
			\label{eq:gradient}\\
			\bigl(f_l^{(t)}\bigr)''(p_l^{\mathrm{p}}) &= h_l^{\mathrm{self}} 
			+ h_l^{\mathrm{cross}}, \label{eq:hessian}
		\end{align}
		where the components are
		\begin{subequations}
			\small
			\allowdisplaybreaks
			\begin{align}
				g_l^{\mathrm{self}} &= -\frac{w_l^{(t)}}{\Gamma_l}
				\sum_{m \in \mathcal{M}_l}
				\tau_p\,\beta_{l,m}^2
				\left[A_1^{(l,m)} - \tau_p\, p_l^{\mathrm{p}}\,
				\beta_{l,m}\, A_2^{(l,m)}\right], 
				\label{eq:g_self}\\
				g_l^{\mathrm{cross}} &= 
				\sum_{k \in \mathcal{P}_l \setminus \{l\}}
				\frac{w_k^{(t)}}{\Gamma_k}\sum_{m \in \mathcal{M}_k}
				\tau_p^2\, p_k^{\mathrm{p}}\, \beta_{k,m}^2\, 
				\beta_{l,m}\, B_1^{(k,l,m)}, 
				\label{eq:g_cross}\\
				h_l^{\mathrm{self}} &= \frac{w_l^{(t)}}{\Gamma_l}
				\sum_{m \in \mathcal{M}_l}
				2\tau_p^2\,\beta_{l,m}^3
				\left[A_2^{(l,m)} - \tau_p\, p_l^{\mathrm{p}}\, 
				\beta_{l,m}\, A_3^{(l,m)}\right], 
				\label{eq:h_self}\\
				h_l^{\mathrm{cross}} &= 
				-\sum_{k \in \mathcal{P}_l \setminus \{l\}}
				\frac{w_k^{(t)}}{\Gamma_k}\sum_{m \in \mathcal{M}_k}
				2\tau_p^3\, p_k^{\mathrm{p}}\, \beta_{k,m}^2\, 
				\beta_{l,m}^2\, B_2^{(k,l,m)}. 
				\label{eq:h_cross}
			\end{align}
		\end{subequations}
	\end{lemma}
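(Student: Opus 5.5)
The plan is to differentiate $f_l^{(t)}$ term by term. The weights $w_k^{(t)}$ and the coefficients $\Gamma_k$ are held fixed: the weights are frozen during the inner sweep, and $\Gamma_k$ in \eqref{eq:Gamma} does not depend on pilot powers. Using $\mathrm{NMSE}_k = 1-\Upsilon_k/\Gamma_k$ with $\Upsilon_k$ from \eqref{eq:Upsilon_k}, we get
\begin{equation*}
\bigl(f_l^{(t)}\bigr)' = -\sum_{k\in\mathcal{P}_l}\frac{w_k^{(t)}}{\Gamma_k}\,\frac{\partial \Upsilon_k}{\partial p_l^{\mathrm{p}}} + \frac{\omega}{K P_{\max}^{\mathrm{p}}}.
\end{equation*}
The linear regularizer contributes only the constant in \eqref{eq:gradient} and nothing to \eqref{eq:hessian}.

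The first structural observation is that for every $k\in\mathcal{P}_l$ we have $\mathcal{P}_k=\mathcal{P}_l$. Hence, by \eqref{eq:psi_matrix}, $\boldsymbol{\Psi}_{k,m}=\boldsymbol{\Psi}_{l,m}$, and this matrix depends affinely on $p_l^{\mathrm{p}}$ with
\begin{equation*}
\frac{\partial\boldsymbol{\Psi}_{k,m}}{\partial p_l^{\mathrm{p}}}=\tau_p\beta_{l,m}\mathbf{R}_{l,m},
\qquad
\frac{\partial\boldsymbol{\Psi}_{k,m}^{-1}}{\partial p_l^{\mathrm{p}}}=-\tau_p\beta_{l,m}\,\boldsymbol{\Psi}_{k,m}^{-1}\mathbf{R}_{l,m}\boldsymbol{\Psi}_{k,m}^{-1}.
\end{equation*}
All derivatives then reduce to repeated use of this single identity. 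I split the sum into the self term $k=l$, where $p_l^{\mathrm{p}}$ appears both as a prefactor and inside the inverse, and the cross terms $k\neq l$, where it appears only inside the inverse.

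For the self term, the product rule gives
\begin{equation*}
\frac{\partial}{\partial p}\Bigl[p\,\mathrm{tr}\bigl(\mathbf{R}_{l,m}\boldsymbol{\Psi}^{-1}\mathbf{R}_{l,m}\bigr)\Bigr]
=\mathrm{tr}\bigl(\mathbf{R}_{l,m}\boldsymbol{\Psi}^{-1}\mathbf{R}_{l,m}\bigr)-p\,\tau_p\beta_{l,m}\,\mathrm{tr}\bigl(\mathbf{R}_{l,m}\boldsymbol{\Psi}^{-1}\mathbf{R}_{l,m}\boldsymbol{\Psi}^{-1}\mathbf{R}_{l,m}\bigr).
\end{equation*}
With $\mathbf{V}_{l,m}=\boldsymbol{\Psi}_{l,m}^{-1}\mathbf{R}_{l,m}$ these traces are exactly $A_1^{(l,m)}$ and $A_2^{(l,m)}$, which yields \eqref{eq:g_self}. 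Differentiating once more, $A_1$ contributes $-\tau_p\beta_{l,m}A_2$. The product $pA_2$ contributes $A_2$ plus $p$ times $\partial A_2$. Since $A_2$ contains two inverses, $\partial A_2=-2\tau_p\beta_{l,m}A_3^{(l,m)}$. Collecting terms gives $-2\tau_p\beta_{l,m}\bigl[A_2-\tau_p p_l^{\mathrm{p}}\beta_{l,m}A_3\bigr]$, and multiplying by $-w_l^{(t)}\tau_p\beta_{l,m}^2/\Gamma_l$ gives \eqref{eq:h_self}.

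For a cross term $k\neq l$, only the inverse moves, so
\begin{equation*}
\frac{\partial}{\partial p_l^{\mathrm{p}}}\mathrm{tr}\bigl(\mathbf{R}_{k,m}\boldsymbol{\Psi}^{-1}\mathbf{R}_{k,m}\bigr)
=-\tau_p\beta_{l,m}\,\mathrm{tr}\bigl(\mathbf{R}_{k,m}\boldsymbol{\Psi}^{-1}\mathbf{R}_{l,m}\boldsymbol{\Psi}^{-1}\mathbf{R}_{k,m}\bigr).
\end{equation*}
Because $\mathbf{R}_{k,m}$ and $\boldsymbol{\Psi}^{-1}$ are Hermitian, $\mathbf{R}_{k,m}\boldsymbol{\Psi}^{-1}=\mathbf{V}_{k,m}^\mathsf{H}$. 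The trace is therefore $\mathrm{tr}(\mathbf{V}_{k,m}^\mathsf{H}\mathbf{R}_{l,m}\mathbf{V}_{k,m})=\|\mathbf{R}_{l,m}^{1/2}\mathbf{V}_{k,m}\|_F^2=B_1^{(k,l,m)}$. The minus sign from $1-\Upsilon_k/\Gamma_k$ flips this term to a positive contribution, which is \eqref{eq:g_cross}. Differentiating again hits each of the two inverses, giving $-2\tau_p\beta_{l,m}\,\mathrm{tr}\bigl(\mathbf{V}_{k,m}^\mathsf{H}\mathbf{R}_{l,m}\boldsymbol{\Psi}^{-1}\mathbf{R}_{l,m}\mathbf{V}_{k,m}\bigr)$. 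This trace equals $\|\boldsymbol{\Psi}^{-1/2}\mathbf{R}_{l,m}\mathbf{V}_{k,m}\|_F^2=B_2^{(k,l,m)}$, which yields \eqref{eq:h_cross}.

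The computation is routine. The step that needs the most care is the bookkeeping that justifies the forms of the cross terms. Three points matter: $\boldsymbol{\Psi}_{k,m}$ is the same matrix for all co-pilot UEs; the inner sums run over $\mathcal{M}_k$ rather than $\mathcal{M}_l$; and the Hermitian rewriting $\mathbf{R}\boldsymbol{\Psi}^{-1}=\mathbf{V}^\mathsf{H}$ is what turns the traces into nonnegative Frobenius norms. The last point also makes the signs of $g_l^{\mathrm{cross}}\geq0$ and $h_l^{\mathrm{cross}}\leq0$ transparent.
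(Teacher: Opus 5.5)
Your proof is correct and follows the same route as the paper. Both use the inverse-derivative identity $\partial\boldsymbol{\Psi}_{k,m}^{-1}/\partial p_l^{\mathrm{p}}=-\tau_p\beta_{l,m}\boldsymbol{\Psi}_{k,m}^{-1}\mathbf{R}_{l,m}\boldsymbol{\Psi}_{k,m}^{-1}$, the product rule for the self term, the chain rule for the co-pilot terms, and a second differentiation that produces $A_3^{(l,m)}$ and $B_2^{(k,l,m)}$. Your observations that $\boldsymbol{\Psi}_{k,m}=\boldsymbol{\Psi}_{l,m}$ for every $k\in\mathcal{P}_l$, and that $\mathbf{R}_{k,m}\boldsymbol{\Psi}_{k,m}^{-1}=\mathbf{V}_{k,m}^{\mathsf{H}}$ turns the cross traces into the Frobenius norms $B_1^{(k,l,m)}$ and $B_2^{(k,l,m)}$, make explicit the bookkeeping that the paper's short proof leaves implicit.
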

	\vspace{-1mm}
	\begin{proof}
		Applying $d\boldsymbol{\Psi}_{k,m}^{-1}/dp_l^{\mathrm{p}}
		= -\boldsymbol{\Psi}_{k,m}^{-1}\mathbf{D}_{l,m}
		\boldsymbol{\Psi}_{k,m}^{-1}$ with
		$\mathbf{D}_{l,m}=\tau_p\beta_{l,m}\mathbf{R}_{l,m}$ and differentiating $\Upsilon_k$ via the product rule for the self-term ($k=l$) and the chain rule for cross-terms ($k\neq l$) yields~\eqref{eq:g_self}--\eqref{eq:g_cross}. A second differentiation introduces
		$2\boldsymbol{\Psi}_{k,m}^{-1}\mathbf{D}_{l,m}
		\boldsymbol{\Psi}_{k,m}^{-1}\mathbf{D}_{l,m}
		\boldsymbol{\Psi}_{k,m}^{-1}$, producing the higher-order traces in~\eqref{eq:h_self}--\eqref{eq:h_cross}. Since the weights $w_k^{(t)}$ enter $f_l^{(t)}$ only as scalar multiplicative coefficients, the derivative expressions retain the same functional form regardless of the specific weight definition.
	\end{proof}
	
	\vspace{-1mm}
	\subsubsection{Safeguarded Newton Solver}
	
	The Hessian component $h_l^{\mathrm{self}}$ 
	in~\eqref{eq:h_self} can change sign depending on the 
	operating regime, rendering the subproblem non-globally 
	convex. We therefore employ a safeguarded Newton method. 
	At inner iteration $j$, the power is updated as
	\begin{equation}\label{eq:newton_update}
		p_l^{(j+1)} = \Pi_{[P_{\min}^{\mathrm{p}}, 
			P_{\max}^{\mathrm{p}}]}\!\left(
		p_l^{(j)} - \alpha^{(j)} d^{(j)}\right),
	\end{equation}
	where $d^{(j)}=(f_l^{(t)})'/(f_l^{(t)})''$ if
	$(f_l^{(t)})''(p_l^{(j)})>\delta_H=10^{-10}$, and
	$d^{(j)}=\mathrm{sgn}((f_l^{(t)})')\cdot 0.1(P_{\max}^{\mathrm{p}}-P_{\min}^{\mathrm{p}})$
	otherwise. 
	The step size $\alpha^{(j)}$ is initialized to one for 
	Newton steps and determined via Armijo backtracking, 
	where $\alpha$ is contracted as 
	$\alpha \leftarrow \rho_{\mathrm{bt}} \alpha$ with 
	$\rho_{\mathrm{bt}} = 0.5$ until the sufficient 
	decrease condition
	\begin{equation}\label{eq:armijo}
		\small
		f_l^{(t)}\!\left(\Pi_{[P_{\min}^{\mathrm{p}},P_{\max}^{\mathrm{p}}]}\!\left(p_l^{(j)}-\alpha d^{(j)}\right)\right)
		\leq
		f_l^{(t)}(p_l^{(j)})-c_1\alpha\left|(f_l^{(t)})'(p_l^{(j)})\right|
	\end{equation}
	is satisfied with $c_1 = 10^{-4}$. To guarantee global 
	optimality over the bounded interval despite potential 
	nonconvexity, the converged Newton solution is compared 
	against both boundary values 
	$\{P_{\min}^{\mathrm{p}}, P_{\max}^{\mathrm{p}}\}$, 
	and the minimizer is selected.
	

	To accelerate convergence, UEs are sorted at each outer iteration by descending current QoS violation $g_k^{(t)}$, so that the most constrained UEs are updated first. Moreover, the regularization weight~$\omega$ is auto-tuned via 
	\begin{equation}
		\omega
		=
		\eta\,P_{\max}^{\mathrm{p}}
		\left\|
		\nabla_{\mathbf{p}^{\mathrm{p}}}
		\left(
		\frac{1}{K}\sum_{k=1}^{K}\mathrm{NMSE}_k(\mathbf{p}^{\mathrm{p}})
		\right)
		\Bigg|_{\mathbf{p}^{\mathrm{p}}=P_{\max}^{\mathrm{p}}\mathbf{1}_K}
		\right\|_1.
		\label{reg_weight}
	\end{equation}
	\vspace{-6mm}
	
	\begin{remark}
		The regularization 
		weight $\omega$ is an auto-calibrated scale for the linear 
		pilot power penalty in \eqref{eq:pilot_obj}. It is evaluated at the 
		conventional full-power pilot baseline 
		$\mathbf{p}^{\mathrm{p}}=P_{\max}^{\mathrm{p}}\mathbf{1}_K$, which 
		provides a reproducible reference point before any pilot power 
		update is performed. The gradient 
		$\nabla_{\mathbf{p}^{\mathrm{p}}}\left(K^{-1}\sum_k 
		\mathrm{NMSE}_k\right)$ has units of inverse power, and 
		multiplication by $P_{\max}^{\mathrm{p}}$ produces a dimensionless 
		scale consistent with the normalized power term 
		$\frac{1}{K P_{\max}^{\mathrm{p}}}\sum_k p_k^{\mathrm{p}}$ in the 
		objective. The dimensionless parameter $\eta$ then controls the 
		relative strength of the power penalty with respect to the NMSE 
		sensitivity at this reference point, so that a fixed value of 
		$\eta$ has a comparable interpretation across network realizations.
	\end{remark}
	
	The penalty parameter $\mu$ is escalated by $\mu_{\mathrm{scale}}>1$ when
	$\max_k g_k^{(t)}>\epsilon_{\mathrm{feas}}$. The complete procedure is summarized in Algorithm~\ref{alg:bcd_pilot}.
	
	\begin{algorithm}[t]
		\small
		\caption{BCD-based Pilot Power Allocation}
		\label{alg:bcd_pilot}
		\KwIn{$\{\mathbf{x}_m^\star\}$, 
			$\{\mathbf{R}_{k,m}\}$, $\boldsymbol{\pi}$, 
			$T_{\max}=3$, $J_{\max}=8$, 
			$\epsilon_{\mathrm{obj}}=10^{-6}$, 
			$\epsilon_{\mathrm{feas}}=10^{-6}$, 
			$\mu_0>0$, $\mu_{\mathrm{scale}}=2$, $\eta=0.1$}
		\KwOut{Optimized pilot powers 
			$(\mathbf{p}^{\mathrm{p}})^\star$}
		\BlankLine
		\textbf{Initialize:} 
		$\mathbf{p}^{\mathrm{p},(0)} =
		P_{\max}^{\mathrm{p}}\mathbf{1}_K$, 
		$\mu \leftarrow \mu_0$, $t \leftarrow 0$\;
		Auto-tune $\omega$ using the gradient-ratio rule above\;
		\Repeat{$|\widehat{\mathcal{L}}^{(t)}-\widehat{\mathcal{L}}^{(t-1)}|
			/|\widehat{\mathcal{L}}^{(t-1)}| < \epsilon_{\mathrm{obj}}$ 
			\textbf{or} $t > T_{\max}$}{
			{Update $\varepsilon_k^{(t)}=\varepsilon_k(\mathbf{p}^{\mathrm{p},(t)})$ via 
				Theorem~\ref{thm:nmse_threshold}, and recompute the reduced SINR 
				coefficients $\{A_k,B_k,C_k\}$ using $\mathbf{p}^{\mathrm{p},(t)}$ 
				and the fixed antenna positions, $\forall k$\;}
			Compute $g_k^{(t)}=\max(0,\mathrm{NMSE}_k-\varepsilon_k^{(t)})$ and
			$w_k^{(t)} = 1/K + 2\mu g_k^{(t)}$, $\forall k$\;
			Sort UEs by descending $g_k^{(t)}$\;
			\For{$l \in \text{sorted UE order}$}{
				\For{$j \leftarrow 1$ \KwTo $J_{\max}$}{
					Evaluate $\bigl(f_l^{(t)}\bigr)', \bigl(f_l^{(t)}\bigr)''$ via 
					\eqref{eq:gradient}--\eqref{eq:hessian}\;
					Set $d^{(j)}$ based on curvature 
					$\bigl(f_l^{(t)}\bigr)''$ versus $\delta_H$\;
					Determine $\alpha^{(j)}$ via Armijo 
					backtracking\;
					Update $p_l^{\mathrm{p}}$ via 
					\eqref{eq:newton_update}\;
				}
				$p_l^{\mathrm{p}} \leftarrow 
				\arg\min_{p \in \{p_l^{\mathrm{p}}, 
					P_{\min}^{\mathrm{p}}, 
					P_{\max}^{\mathrm{p}}\}} f_l^{(t)}(p)$\;
				Update $\boldsymbol{\Psi}_{k,m}$ and 
				$\mathrm{NMSE}_k$ for 
				$k \in \mathcal{P}_l$\;
			}
			Evaluate $\widehat{\mathcal{L}}^{(t)}(\mathbf{p}^{\mathrm{p},(t+1)};\mu)$\;
			\lIf{$\max_k g_k^{(t)} > \epsilon_{\mathrm{feas}}$}{
				$\mu \leftarrow \mu_{\mathrm{scale}} \cdot \mu$}
			$t \leftarrow t + 1$\;
		}
	\end{algorithm}
	\vspace{-2mm}
	\begin{proposition}\label{prop:bcd_conv}
		Fix an outer iteration $t$, together with the frozen thresholds
		$\{\varepsilon_k^{(t)}\}_{k=1}^{K}$ and weights
		$\{w_k^{(t)}\}_{k=1}^{K}$. Then the inner BCD sweep of
		Algorithm~\ref{alg:bcd_pilot} generates a sequence whose surrogate objective
		$\widehat{\mathcal{L}}^{(t)}(\mathbf{p}^{\mathrm{p}};\mu)$
		is monotonically non-increasing. Moreover, every limit point of the inner BCD sequence is a coordinate-wise minimum of
		$\widehat{\mathcal{L}}^{(t)}(\cdot;\mu)$ over
		$[P_{\min}^{\mathrm{p}}, P_{\max}^{\mathrm{p}}]^K$.
	\end{proposition}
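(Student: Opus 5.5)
The plan is to exploit the block structure of the surrogate. With the weights $\{w_k^{(t)}\}$ frozen, $\widehat{\mathcal{L}}^{(t)}(\mathbf{p}^{\mathrm{p}};\mu)$ is a fixed smooth function of $\mathbf{p}^{\mathrm{p}}$ on the box $[P_{\min}^{\mathrm{p}},P_{\max}^{\mathrm{p}}]^K$. For each $l$, the terms of $\widehat{\mathcal{L}}^{(t)}$ that depend on $p_l^{\mathrm{p}}$ are exactly those in $f_l^{(t)}$ in \eqref{eq:bcd_sub}. This holds because $\mathrm{NMSE}_k$ depends on $p_l^{\mathrm{p}}$ only when $k\in\mathcal{P}_l$, through $\boldsymbol{\Psi}_{k,m}$ in \eqref{eq:psi_matrix} and the factor $p_k^{\mathrm{p}}$ in \eqref{eq:Upsilon_k}. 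Hence, with the other coordinates held fixed,
\[
\widehat{\mathcal{L}}^{(t)}(\mathbf{p}^{\mathrm{p}};\mu)=f_l^{(t)}(p_l^{\mathrm{p}})+c_{-l},
\]
where $c_{-l}$ does not depend on $p_l^{\mathrm{p}}$. Any reduction of $f_l^{(t)}$ in a block update is therefore an equal reduction of the surrogate.

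For monotonicity, I will show that each block update does not increase $f_l^{(t)}$. Every Newton step \eqref{eq:newton_update} is accepted only after the Armijo condition \eqref{eq:armijo} holds, so $f_l^{(t)}(p_l^{(j+1)})\le f_l^{(t)}(p_l^{(j)})$. If backtracking fails to find such an $\alpha$ within finitely many contractions, the point is kept, so the value still does not increase. The final selection $\arg\min$ over $\{p_l^{\mathrm{p}},P_{\min}^{\mathrm{p}},P_{\max}^{\mathrm{p}}\}$ can only lower the value further. Chaining these inequalities over the sorted UE order shows that $\widehat{\mathcal{L}}^{(t)}$ is non-increasing along the sweep and across repeated sweeps. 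Since the surrogate is continuous on a compact box, it is bounded below, and the sequence of values converges.

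For the limit-point claim, I will take a subsequence converging to $\bar{\mathbf{p}}$ and use continuity of $\widehat{\mathcal{L}}^{(t)}$. This continuity follows from the matrix $\boldsymbol{\Psi}_{k,m}\succeq\sigma^2\mathbf{I}_N$ being uniformly invertible. The aim is to show that for each $l$, $\bar p_l$ minimizes $f_l^{(t)}(\cdot\,;\bar{\mathbf{p}}_{-l})$ over the interval. The standard route combines two facts. First, the block update is an (approximate) exact minimization over the interval: the safeguarded Newton iterate is compared against both endpoints, as the paper asserts, so any interior stationary minimizer or boundary minimizer is captured. Second, the per-block decrease tends to zero because the objective values converge. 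Passing to the limit then yields $f_l^{(t)}(\bar p_l)\le f_l^{(t)}(q)$ for all $q$, as in the classical Grippo--Sciandrone/Bertsekas argument for Gauss--Seidel methods on a product of compact sets.

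The main obstacle is this last step. With only $J_{\max}$ Newton iterations and a possibly nonconvex $f_l^{(t)}$ (noted after Lemma~\ref{lem:grad_hess}), the block update is not guaranteed to be a global scalar minimizer. Uniqueness of block minimizers, which the classical theorem uses, may also fail. I would first try to weaken the conclusion to coordinate-wise stationarity. This means showing $\bar p_l$ satisfies the scalar variational inequality $(f_l^{(t)})'(\bar p_l)(q-\bar p_l)\ge 0$, using the Armijo sufficient decrease and the bounded Hessian from Lemma~\ref{lem:grad_hess}. I would then invoke the endpoint comparison to upgrade this to coordinate-wise minimality. Failing that, I would appeal to the fact that $f_l^{(t)}$ is smooth and one-dimensional, so it has finitely many local minima under mild regularity, and treat the Newton-plus-endpoint selection as returning the global scalar minimizer.
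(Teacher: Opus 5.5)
Your monotonicity argument is correct, and it is more careful than the paper's. The paper derives non-increase from the assertion that each per-UE update \emph{globally} minimizes \eqref{eq:bcd_sub} ``via the boundary comparison''. You need only the block identity $\widehat{\mathcal{L}}^{(t)}=f_l^{(t)}+c_{-l}$, the Armijo test, and the final three-point selection. Your keep-the-point convention for failed backtracking is not in Algorithm~\ref{alg:bcd_pilot}, but it is genuinely required. At the initialization $p_l^{\mathrm{p}}=P_{\max}^{\mathrm{p}}$ with $(f_l^{(t)})'<0$, both branches give $d^{(j)}<0$. The projected trial point then equals the current point for every $\alpha$, so \eqref{eq:armijo} can never hold.

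The second claim is where the proposal has a genuine gap, and your diagnosis of it is accurate. The paper's proof does not resolve it either. It asserts that the Newton-plus-endpoint selection globally minimizes $f_l^{(t)}$ on the interval, then cites compactness, continuity and Bertsekas' Prop.~2.7.1. Your last fallback is therefore exactly the paper's argument, and it simply assumes the missing fact.

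Your first fallback fails at the `upgrade'. A point that satisfies the scalar variational inequality and beats both $P_{\min}^{\mathrm{p}}$ and $P_{\max}^{\mathrm{p}}$ can still be a non-global interior local minimum. Nothing in Lemma~\ref{lem:grad_hess} makes $f_l^{(t)}$ unimodal; the cross-curvature $h_l^{\mathrm{cross}}$ in \eqref{eq:h_cross} is nonpositive. If Newton settles at such a point, later sweeps need not leave it.

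The stationarity step is not secured either, for two reasons:
\begin{itemize}
\item The three-point selection can move $p_l^{\mathrm{p}}$ macroscopically for an arbitrarily small decrease. Vanishing per-block decrease therefore does not force the intermediate iterates of a sweep to share the limit $\bar{\mathbf{p}}$.
\item The right-hand side of \eqref{eq:armijo} is $c_1\alpha|(f_l^{(t)})'|$ rather than $c_1\alpha(f_l^{(t)})'d^{(j)}$. Newton steps with $|d^{(j)}|$ below roughly $c_1$ fail the test for every $\alpha$, so your convention gives at most approximate stationarity.
\end{itemize}

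Even exact global block minimization is not enough in general once there are three or more blocks, unless block minima are uniquely attained. Bertsekas' Prop.~2.7.1 assumes uniqueness, and Powell's cycling example shows why it is needed. What exact global minimization does give, via continuity of $\mathbf{y}\mapsto\min_q f(q,\mathbf{y})$ on the box and convergence of the monotone values, is limited. A limit of end-of-sweep iterates is coordinate-wise optimal in the coordinates visited first and last within each pilot group. Because $\widehat{\mathcal{L}}^{(t)}$ separates across pilot groups, this settles the claim when every $|\mathcal{P}_l|\le 2$. In general you must add hypotheses, for instance that each block update returns the global minimizer of $f_l^{(t)}$ on the interval and that this minimizer is unique. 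Under those hypotheses the Bertsekas argument goes through.
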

	\vspace{-2mm}
	\begin{proof}
		For fixed $t$, the weights $\{w_k^{(t)}\}$ remain constant. Each per-UE update globally minimizes~\eqref{eq:bcd_sub} over $[P_{\min}^{\mathrm{p}}, P_{\max}^{\mathrm{p}}]$ via the boundary comparison, so $\widehat{\mathcal{L}}^{(t)}(\cdot;\mu)$ cannot increase. Coordinate-wise optimality of limit points follows from compactness and continuity~\cite[Prop.~2.7.1]{bertsekas1999}. 
	\end{proof}
	
	\subsubsection{Computational Complexity}
	\label{para:com_com_BCD}
	Each outer iteration involves $K$ per-UE updates. For 
	UE~$l$, the dominant costs are: (i)~recomputation of 
	LMMSE covariances for the co-pilot set at 
	$\mathcal{O}(|\mathcal{P}_l| \cdot 
	|\mathcal{M}_{\max}| \cdot N^3)$, and 
	(ii)~$J_{\max}$ Newton steps, each evaluating the 
	auxiliary trace quantities at 
	$\mathcal{O}(|\mathcal{P}_l| \cdot 
	|\mathcal{M}_{\max}| \cdot N^3)$. The total 
	per-outer-iteration complexity is 
	$\mathcal{O}(K(1 + J_{\max}) \cdot 
	|\mathcal{P}|_{\max} \cdot 
	|\mathcal{M}|_{\max} \cdot N^3)$, where 
	$|\mathcal{P}|_{\max} = K/\tau_p$ and 
	$|\mathcal{M}|_{\max} = \max_k |\mathcal{M}_k|$. 
	Since the per-UE subproblems are scalar with 
	analytically available gradient and Hessian, the 
	approach avoids generic solver overhead and scales 
	efficiently for dense CF mMIMO networks.
	
	\vspace{-3mm}
	
	\subsection{Practical Considerations and Implementation Aspects}
	\label{subsec:practical_considerations}
	
	We outline how the proposed optimization framework integrates with practical CF mMIMO deployments serving URLLC traffic. The discussion clarifies the operational timescales, the statistical-CSI acquisition pipeline, fronthaul signaling, the hardware realization of FA elements, and mobility-related considerations, all of which shape the applicability of the proposed framework.
	
	\subsubsection{Two-timescale operation}
	\label{subsubsec:two_timescale}
	
	The proposed framework follows a two-timescale architecture. FA 
	positions and pilot powers are optimized using long-term statistical 
	CSI, comprising the large-scale fading coefficients $\{\beta_{k,m}\}$ 
	and spatial correlation matrices $\{\mathbf{R}_{k,m}\}$. Hence, they are updated on the slow fading timescale over which these 
	statistics remain stable under the local scattering model in 
	\eqref{eq:correlation_model}~\cite{adhikary2013joint,bjornson2017massive}. 
	Within each coherence interval of length $\tau_c$, URLLC pilot and 
	data transmission proceed under the currently configured antenna 
	positions and pilot powers. Consequently, the delay constraint in 
	\eqref{eq:delay_constraint} governs the packet service process after 
	configuration, rather than requiring mechanical or port-level FA 
	reconfiguration within each per-packet latency budget $D_{\max,k}$. This separation is consistent 
	with the MMPP/M/1 model in Section~\ref{subsec:urllc_traffic}, where 
	the traffic process determines the required service rate, while the 
	physical-layer optimizer provides a statistical operating point that 
	supports this rate. In practice, FA reconfiguration is treated as a 
	slow-timescale configuration update, performed before a new 
	statistical operating period, so packet transmission proceeds under a 
	fixed antenna configuration.
	
	\subsubsection{Statistical CSI acquisition}
	\label{subsubsec:statistical_csi}
	
	The position and power updates in Algorithms~\ref{alg:sca_position} 
	and~\ref{alg:bcd_pilot} rely on statistical CSI, primarily 
	$\{\beta_{k,m}\}$ and $\{\mathbf{R}_{k,m}\}$, rather than instantaneous 
	small-scale channel realizations. These quantities vary over many 
	coherence intervals and can be estimated from long-term channel 
	observations, covariance tracking, or periodic 
	sounding~\cite{neumann2018covariance}. Once the statistics are 
	available, the joint optimization is executed on the slow timescale, 
	and the resulting configuration is reused for a large number of URLLC 
	packets. The optimization overhead is therefore amortized over many 
	coherence blocks and does not enter the per-packet latency budget.
	
	\subsubsection{Fronthaul overhead}
	\label{subsubsec:fronthaul}
	
	The fronthaul signaling associated with the proposed framework is 
	statistical rather than per-packet. APs do not forward instantaneous 
	channel realizations for antenna-position optimization. Instead, each 
	AP reports the covariance statistics of the UE links in its serving 
	set, namely the $N\times N$ matrices $\mathbf{R}_{k,m}$ for 
	$k\in\mathcal{K}_m$, together with large-scale fading information. 
	Since each covariance matrix has dimension $N\times N$ and the served 
	UE set is bounded by UC clustering, the statistical-CSI 
	exchange scales with the local cluster size rather than with the 
	total number of AP-UE pairs in the network. The optimization 
	therefore does not require additional per-packet fronthaul exchange 
	beyond the baseline centralized CF processing described in 
	Section~\ref{subsec:data_transmission}, provided that the statistical 
	updates are triggered only when the long-term propagation environment 
	changes significantly.
	
	\subsubsection{Discrete and switched-port implementations}
	\label{subsubsec:discrete_impl}
	
	The continuous-position model in \eqref{eq:position_constraints} 
	represents an ideal high-resolution FA implementation and provides a 
	tractable benchmark for position optimization. For switched-port or 
	pixelated FA architectures, the same framework applies with 
	$\mathcal{X}_m$ restricted to a predefined port grid. A practical 
	deployment can first compute the continuous SCA solution and then map 
	each element to the nearest feasible port while preserving the 
	ordering and minimum-spacing constraints in 
	\eqref{eq:position_constraints}. The resulting implementation loss is 
	controlled by the port spacing, since the correlation entries in 
	\eqref{eq:correlation_model} vary smoothly with the inter-element 
	distances. When candidate positions are restricted to a discretized 
	set, monotonicity can be enforced by accepting only grid-feasible 
	candidates that reduce the true penalized objective in 
	\eqref{eq:augmented_lagrangian}. The continuous-stationarity 
	statement in Proposition~\ref{prop:sca_convergence}, however, does 
	not directly apply to a finite grid without a separate discrete 
	optimality analysis. Quantifying the resulting loss for a specific 
	hardware topology requires the corresponding port layout and 
	switching constraints, and is therefore treated as an 
	implementation-specific refinement rather than a change to the 
	optimization principle.
	
	\subsubsection{Mutual coupling}
	\label{subsubsec:coupling}
	
	The minimum-spacing constraint $d_{\min}=\lambda/2$ in 
	\eqref{eq:position_constraints} is included to mitigate strong 
	electromagnetic coupling between neighboring elements and to maintain 
	a physically meaningful correlation model. The analysis in this paper 
	follows the standard covariance-based abstraction used in correlated 
	massive MIMO~\cite{adhikary2013joint,bjornson2017massive}, where 
	array geometry determines the correlation through inter-element 
	separations and angular spread. If a coupling-aware antenna model is 
	available, one representative extension replaces 
	$\mathbf{R}_{k,m}(\mathbf{x}_m)$ in \eqref{eq:correlation_model} with 
	a coupling-compensated covariance 
	$\widetilde{\mathbf{R}}_{k,m}(\mathbf{x}_m)=
	\mathbf{C}_m(\mathbf{x}_m)\mathbf{R}_{k,m}(\mathbf{x}_m)
	\mathbf{C}_m^{\mathsf{H}}(\mathbf{x}_m)$, where 
	$\mathbf{C}_m(\mathbf{x}_m)$ denotes the position-dependent coupling 
	matrix. The SCA and BCD structures remain applicable provided 
	$\widetilde{\mathbf{R}}_{k,m}$ is differentiable in $\mathbf{x}_m$ 
	and satisfies the same smoothness and boundedness conditions used in 
	the descent argument. The gradient expressions would then include 
	the additional derivatives of $\mathbf{C}_m(\mathbf{x}_m)$. 
	Thus, 
	mutual coupling primarily changes the covariance model supplied to 
	the optimizer, while the statistical-CSI-based SCA and BCD structure 
	remains applicable under the corresponding smoothness conditions.
	
	\subsubsection{Mobility and covariance stability}
	\label{subsubsec:mobility}
	
	The proposed design targets scenarios in which the spatial covariance 
	evolves slowly relative to the FA reconfiguration interval, such as 
	indoor industrial deployments, fixed wireless sensors, and 
	low-to-moderate mobility URLLC terminals~\cite{sarker2025NS}. In such 
	regimes, the nominal angles $\theta_{k,m}$ and angular spread 
	$\sigma_\theta$ in \eqref{eq:correlation_model} remain approximately 
	stationary across successive reconfiguration cycles. Under moderate 
	mobility, drift in $\{\mathbf{R}_{k,m}\}$ can be handled through 
	periodic refresh of the statistical CSI and warm-started re-execution 
	of Algorithms~\ref{alg:sca_position} and~\ref{alg:bcd_pilot} 
	initialized at the previous solution. Very fast mobility, where the 
	covariance changes on the same order as the statistical update or 
	reconfiguration period, would require predictive covariance tracking 
	or mobility-aware reconfiguration along the lines 
	of~\cite{DaiMovableCF_HST}, which is left for future work as a 
	natural extension of the proposed statistical-CSI framework.
	
	Taken together, these considerations position FA reconfiguration and 
	pilot power allocation as slow-timescale configuration steps that 
	reshape the covariance environment in which fast-timescale URLLC 
	transmission unfolds. The optimization results developed above 
	therefore apply to any CF mMIMO deployment consistent with this 
	two-timescale operating regime.
	
	\vspace{-4mm}
	\section{Numerical Analysis}
	\label{sec:simulations}
	\vspace{-2mm}
	\subsection{Simulation Setup}
	\label{subsec:simulation_setup}
	
	We consider a CF mMIMO network deployed over a $200\times200$~m$^2$ area with $M = 16$ APs and $K = 8$ single-antenna UEs, unless stated otherwise. Each AP is equipped with $N = 4$ FA elements positioned within an aperture of $W_m = 4\lambda$ with minimum inter-element spacing $d_{\min} = \lambda/2$. Both APs and UEs are distributed uniformly at random, with wrap-around applied to suppress boundary effects \cite{bjornson2017massive}. The system operates at carrier frequency $f_c = 3.5$~GHz with total bandwidth $B = 20$~MHz. Channels follow 3GPP
	Urban Microcell path loss~\cite{3gpp_38901} with 4~dB log-normal
	shadowing and angular spread $\sigma_\theta=10^\circ$. The maximum UE transmit power for both pilot and data is $20$~dBm, and the receiver noise figure is $9$~dB. The coherence interval spans $\tau_c = 200$ symbols with $\tau_p = 4$ orthogonal pilot sequences and pilots are  assigned randomly to all UEs. For AP-UE association, we adopt the approach from \cite{sarker_auction}. The URLLC traffic is modeled as a two-state MMPP as described in Section~\ref{subsec:urllc_traffic}, and in the simulations, we consider $\lambda_{\mathrm{ON},k}\in[50,200]$ pkt/s, $\xi_k \in [33, 100]$~s$^{-1}$, $\nu_k \in [8, 17]$~s$^{-1}$, $L_k \in [256, 512]$~bits, and $D_{\max,k}\in[1,5]$ ms, while the decoding- and delay-violation targets are fixed at $\epsilon_k^{\mathrm{dec}}=10^{-5}$ and $\epsilon_k^{\mathrm{delay}}=10^{-5}$. All results are averaged over 100 independent network realizations. 
	
	\vspace{-4mm}
	\subsection{Validation of Analytical Framework}
	\label{subsec:validation}
	
	Before evaluating the proposed optimization algorithm, we validate the analytical approximations underlying the problem formulation, focusing on physical-layer approximation accuracy. Specifically, we verify the trace product approximation in Lemma~\ref{lem:trace_approx}, the NMSE variation bound in Proposition~\ref{prop:nmse_variation}, and the quadratic SINR-NMSE relationship derived in Section~\ref{subsec:sinr_nmse}. Notably, the tested scenarios include non-orthogonal pilot reuse, and the approximation accuracy remains high in these contaminated operating conditions, indicating that the pointwise-evaluated coefficient $C_k$ captures the coherent pilot-sharing contribution sufficiently well for SINR prediction and NMSE-threshold computation in the operating regimes of interest. Unless otherwise stated, we use 100 setups yielding 800 UE samples; for the computationally intensive traffic-dependent validation, we use 50 setups (400 samples).
	\subsubsection{Trace Product Approximation Bound}
	\label{subsubsec:trace_bound}
	Fig.~\ref{fig:trace_approx} validates Lemma~\ref{lem:trace_approx} by plotting
	the relative trace product approximation error versus pilot
	SNR for different angular spreads~$\sigma_\theta$. The
	actual error (markers) closely follows the theoretical bound
	(dashed lines) across all configurations. As predicted, the error decreases with increasing pilot SNR and with larger angular spread, which reduces the condition number $\kappa$. For $\rho_{k,m} \geq 10$~dB and $\sigma_\theta \geq 10^\circ$, the  error falls below 1\%, confirming the validity conditions stated after Lemma~\ref{lem:trace_approx}.
	\begin{figure}[t]
		\centering
		\includegraphics[scale = 0.4]{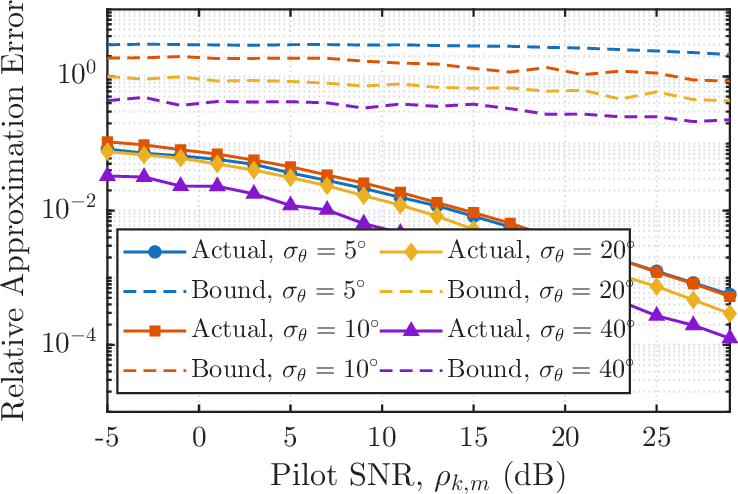}
		\caption{Trace product approximation error versus pilot SNR for different angular spreads. Markers show actual error; dashed lines show theoretical bounds from Lemma~\ref{lem:trace_approx}.}
		\label{fig:trace_approx}
	\end{figure}
	
	\subsubsection{NMSE Variation Bound and Sensitivity}
	\label{subsubsec:nmse_variation}
	Fig.~\ref{fig:nmse_variation}(a) compares the empirical per-AP NMSE variation $\Delta_k\triangleq \max_{m,m'\in\mathcal{M}_k} |\mathrm{NMSE}_{k,m}-\mathrm{NMSE}_{k,m'}|$ with the bound $\delta_k$ under prescribed intra-cluster large-scale-fading spreads. The bound holds in all $2400$ controlled realizations. Both $\Delta_k$ and $\delta_k$ generally decrease with pilot SNR, although their rates depend strongly on the weakest serving link. For spreads of $3$ and $10$~dB, the mean bound falls below $0.05$ at average-link pilot SNRs of $15$ and $20$~dB, respectively. At spreads of $20$ and $30$~dB, the weakest-link pilot SNR remains low over the considered average-link SNR range, resulting in larger per-AP NMSE variation. The bound therefore provides a valid characterization across the tested heterogeneity range, while the uniform-NMSE approximation is most accurate when all serving links retain sufficient estimation quality.
	
	Fig.~\ref{fig:nmse_variation}(b) quantifies the corresponding SINR sensitivity by comparing a heterogeneous surrogate that retains $\alpha_{k,m}=1-\mathrm{NMSE}_{k,m}$ with the uniform surrogate using the weighted effective factor $\alpha_k=1-\mathrm{NMSE}_k$. The effective-NMSE definition preserves the desired-signal and effective-noise terms, while the residual difference arises primarily from the beamforming-uncertainty and noncoherent-interference terms. Among $15095$ practical-regime samples, the median and $90$th-percentile absolute SINR deviations are $0.017$ and $0.286$~dB, respectively. For $\Delta_k<0.3$, which includes approximately $69\%$ of these samples, the $90$th-percentile deviation remains below $0.283$~dB. Larger deviations are concentrated in the upper tail where the serving-AP NMSE values are strongly nonuniform. Hence, the approximation is reliable across the dominant practical regime, while configurations with exceptionally large per-AP NMSE disparity require explicit heterogeneous treatment.

	\begin{figure}[t]
		\centering
		\subfloat[]{\includegraphics[width=0.48\columnwidth]{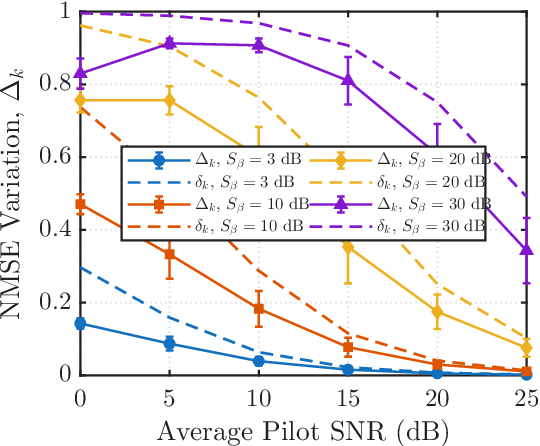}}
		\hfill
		\subfloat[]{\includegraphics[width=0.48\columnwidth]{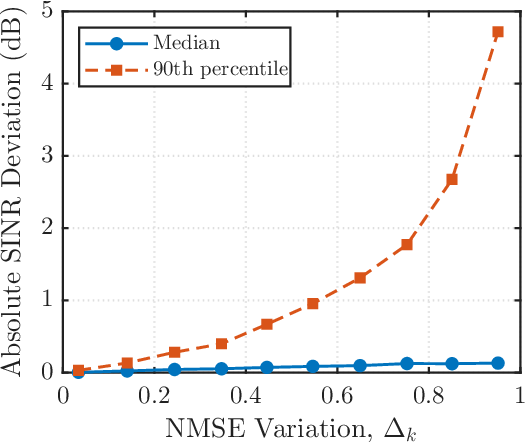}}
		\caption{{Validation of Assumption~\ref{ass:uniform_nmse}: (a) empirical NMSE variation $\Delta_k$ and bound $\delta_k$ versus average-link pilot SNR; (b) median and $90$th-percentile absolute SINR deviation versus $\Delta_k$ in the practical regime ($\mathrm{SINR}>-10$~dB and $\mathrm{NMSE}<0.5$).}}
		\label{fig:nmse_variation}
	\end{figure}
	
	\subsubsection{SINR-NMSE Relationship}
	To validate the quadratic SINR-NMSE relationship, Fig.~\ref{fig:sinr_validation}(a) compares exact SINR from \eqref{eq:sinr} and
	approximate SINR from~\eqref{eq:sinr_quadratic} across 800
	randomly generated UE configurations, with color indicating
	effective NMSE. Samples outside the practical regime
	($\mathrm{SINR}\leq -10$\,dB or $\mathrm{NMSE}\geq 0.5$)
	constitute only 3.5\% of samples (gray points).
	Fig.~\ref{fig:sinr_validation}(b) shows the error distribution,
	confirming substantially lower errors within the practical
	regime. Table~\ref{tab:approx_accuracy} summarizes accuracy
	across operating regimes, showing the median absolute error is
	0.02--0.03\,dB in all cases, with over 96\% of samples
	within 10\% error when $\mathrm{NMSE}<0.1$. Consistent with Fig.~\ref{fig:nmse_variation}(b), the uniform-NMSE substitution produces median and $90$th-percentile absolute SINR deviations of $0.017$ and $0.286$~dB, respectively, confirming the limited impact of Assumption~\ref{ass:uniform_nmse} over the dominant practical regime. These results further confirm that the reduced coefficient $C_k$ provides an accurate effective representation of the coherent pilot-sharing contribution for the purpose of SINR prediction and NMSE-threshold computation in the operating regimes of interest.
	\begin{figure}[t]
		\centering
		\subfloat[Exact vs. approximate SINR]{\includegraphics[width=0.48\columnwidth]{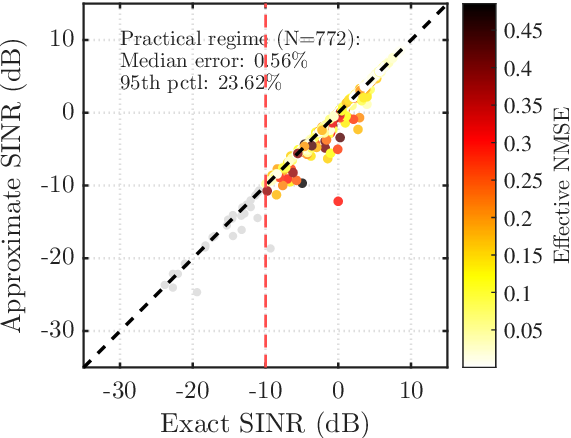}}
		\hfill
		\subfloat[Error distribution]{\includegraphics[width=0.48\columnwidth]{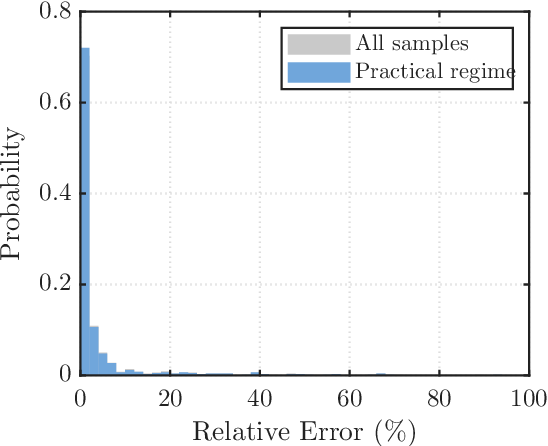}}
		\caption{Validation of SINR-NMSE approximation: (a) scatter plot comparing exact and approximate SINR values, with color indicating NMSE (gray points lie outside practical regime); (b) distribution of relative approximation error for all samples (gray) and practical operating regime (blue).}
		\label{fig:sinr_validation}
	\end{figure}
	
	\begin{table}[t]
		\centering
		\caption{SINR Approximation Accuracy by Operating Regime}
		\label{tab:approx_accuracy}
		\begin{tabular}{lcccc}
			\hline
			\textbf{Operating} & \textbf{Samples} & \textbf{Median} & \textbf{Median Abs.} & \textbf{Accur. within} \\
			\textbf{Regime} & (\%) & Error (\%) & Error (dB) &  $<10$\% (\%) \\
			\hline
			All samples & 100 & 0.70 & 0.03  & 89.6 \\
			Practical$^*$ & 96.5 & 0.67 & 0.03  & 90.2 \\
			NMSE $< 0.1$ & 89.4 & 0.57 & 0.02  & 96.8 \\
			SINR $> 0$~dB & 73.1 & 0.51 & 0.02  & 93.7 \\
			\hline
			\multicolumn{5}{l}{\footnotesize $^*$Practical regime: SINR $> -10$~dB and NMSE $< 0.5$.}
		\end{tabular}
	\end{table}
	\subsubsection{Validation at URLLC Operating Points}
	\label{subsubsec:urllc_op_validation}
	We next validate the approximation at traffic-derived operating
	points using 50 setups (400 UE samples). For each setup, the MMPP/M/1 queue is used to obtain the minimum required rate $R_k^{\min}$ under the delay constraint. Here, feasibility is defined by $\mathrm{SINR}_k \ge \bar{\gamma}_k$, where $\bar{\gamma}_k$ is obtained from the MMPP-based minimum rate requirement $R_k^{\min}$ via FBL inversion. The SINR
	targets from MMPP/M/1 queueing analysis span
	$[-6.62,\,-1.56]$\,dB with median $-3.96$\,dB, and 86.5\%
	of samples are feasible ($\mathrm{SINR}_k\geq\bar{\gamma}_k$).
	Fig.~\ref{fig:urllc_op_validation}(a) shows that feasible
	samples align closely with the diagonal, while
	Fig.~\ref{fig:urllc_op_validation}(b) reveals a median
	relative error of 0.51\% (0.022\,dB absolute) in the feasible
	region versus 3.52\% in the infeasible region. The
	approximation is thus most accurate precisely where QoS
	constraints are active, ensuring that derived NMSE thresholds
	faithfully reflect the underlying SINR requirements.
	
	\begin{figure}[t]
		\centering
		\subfloat[Exact vs. approximate SINR]{\includegraphics[width=0.48\columnwidth]{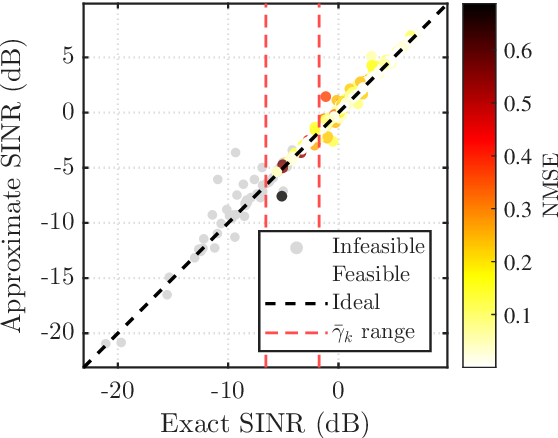}}
		\hfill
		\subfloat[Error distribution]{\includegraphics[width=0.48\columnwidth]{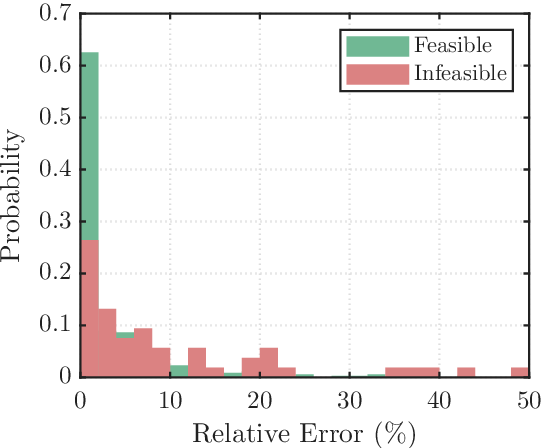}}
		\caption{Traffic-dependent validation at URLLC operating points: (a) scatter plot comparing exact and approximate SINR values, with color indicating NMSE (gray points are infeasible); (b) distribution of relative approximation error for feasible (green) and infeasible (red) samples.}
		\label{fig:urllc_op_validation}
	\end{figure}

	\vspace{-3mm}
	\subsection{Performance of Proposed Antenna Positioning Scheme}
	\label{subsection:performance_positioning}
	
	In this section, we evaluate the performance of the proposed SCA-based positioning scheme (Algorithm~\ref{alg:sca_position}) against five baselines across different parameters. The fixed uniform array (FUA) serves as the reference, employing conventional $\lambda/2$ spacing. Random fluid antenna (RFA) averages over 50 random feasible configurations, while channel-aware spacing (CAS) and maximum spacing (MS) represent heuristic approaches that adapt element positions based on angular statistics and aperture maximization, respectively. Lastly, particle swarm optimization (PSO) with a swarm of $P_s = 30$ particles evolved over $G = 100$ generations provides a global search benchmark. All considered schemes adopt full pilot power transmission and random pilot assignment. In what follows, we investigate the performance of Algorithm~\ref{alg:sca_position} under varying array size, aperture size, and network load across NMSE reduction and feasibility metrics.
	
	{Before presenting the positioning results, we clarify how channel-estimation error should be interpreted in the following performance metrics. The proposed schemes act on the LMMSE channel-estimation error 
		through the NMSE in \eqref{eq:nmse_effective}. A lower per-UE NMSE 
		improves the effective SINR through the quadratic-fractional mapping 
		in \eqref{eq:sinr_quadratic}, increases the FBL service rate in 
		\eqref{eq:fbl_rate}, and thereby improves the delay-reliability 
		margin of the MMPP/M/1 queue in 
		Section~\ref{subsec:urllc_traffic}. The QoS feasibility metric 
		reported below therefore indicates whether the estimation quality is 
		sufficient to meet the SINR target implied by the FBL and 
		queueing-level delay analysis. The optimization inputs 
		$\{\beta_{k,m}\}$ and $\{\mathbf{R}_{k,m}\}$ are long-term statistics 
		estimated from channel observations. Finite-sample uncertainty in 
		these statistics would perturb the NMSE values and the corresponding 
		thresholds, and a dedicated robustness study under imperfect 
		statistical CSI is left for future work.}
	\subsubsection{Impact of Array Size}
	\label{subsubsection:array_size}
	
	Fig.~\ref{fig:NMSEreduction_vs_N} presents NMSE reduction compared to FUA and feasibility performance for different array sizes. The left panel reveals that the proposed SCA scheme consistently achieves significant NMSE reduction, improving by up to $22\%$ as $N$ increases. This performance closely matches PSO, which attains up to $21.4\%$ reduction, confirming that SCA effectively exploits the enlarged optimization space afforded by additional antenna elements. The near-optimal performance of SCA stems from the gradient-based
	descent exploiting the differentiable NMSE structure, adaptive
	proximal regularization ensuring stable convergence, and progressive
	penalty escalation balancing NMSE minimization against QoS
	satisfaction (Section~\ref{subsec:position_opt}). In contrast, the heuristic schemes exhibit substantial degradation relative to FUA, with MS and CAS increasing NMSE by up to $2.6\times$ and $2.5\times$, respectively, while RFA degrades performance by up to $2.2\times$. This degradation arises because spreading elements disrupts the
	strong spatial correlation that FUA's minimum spacing preserves,
	undermining LMMSE coherent combining gain, while CAS optimizes for a
	criterion misaligned with estimation quality,  established in Section~\ref{subsec:channel_model}.
	
	
	
	The right panel of Fig.~\ref{fig:NMSEreduction_vs_N} shows QoS
	feasibility, defined as the fraction of UEs with $\mathrm{NMSE}_k\leq\varepsilon_k$ derived from the target SINR $\bar{\gamma}_k$ via Theorem~\ref{thm:nmse_threshold}. All schemes achieve comparable feasibility, increasing from approximately $73\%$ to $92.5\%$ as $N$ grows, aligning with the analytical relationship established in Section~\ref{subsubsec:urllc_op_validation}. This confirms that feasibility is governed by fundamental channel conditions rather than positioning strategy. For feasible UEs, the proposed SCA scheme translates its NMSE reduction into higher achieved SINR and rate, validating its practical utility for FA-enabled CF mMIMO systems.
	\begin{figure}[tb]
		\centering
		\includegraphics[scale=0.5]{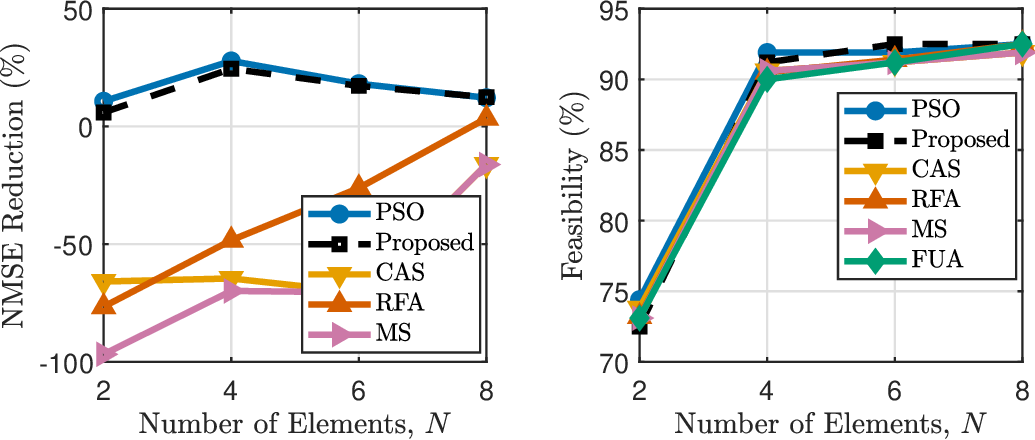}
		\caption{NMSE reduction relative to FUA (\%, left) and feasibility probability (\%, right) versus \(N\) for \(M=16\), \(K=8\), \(\tau_p=4\), and \(W_m=4\lambda\).}
		\label{fig:NMSEreduction_vs_N}
	\end{figure}
	\subsubsection{Impact of Aperture Size}
	\label{subsubsection:aperture_size}
	
	Fig.~\ref{fig:NMSEreduction_vs_Wm} reports the effect of the aperture size $W_m$ on the NMSE reduction relative to FUA and feasibility performance. The left panel reveals that both SCA and PSO peak
	at $W_m=4\lambda$ (12.7\% and 14.5\% reduction, respectively), with gains declining to 2.4\% and 3.6\% at $W_m=8\lambda$. This non-monotonic behavior occurs because larger apertures expand the feasible region without adding
	elements, so that favorable correlation structures become harder to identify and excessive spacing exceeds the decorrelation distance. Heuristic schemes degrade by up to $2.9\times$ at $W_m=8\lambda$, consistent with the mechanisms identified in Section~\ref{subsubsection:array_size}. The right panel shows that feasibility remains stable at 85\%--92\% across all aperture sizes, confirming that moderate apertures around $4\lambda$ provide the best tradeoff between optimization flexibility and solution quality.
	\begin{figure}[tb]
		\centering
		\includegraphics[scale=0.5]{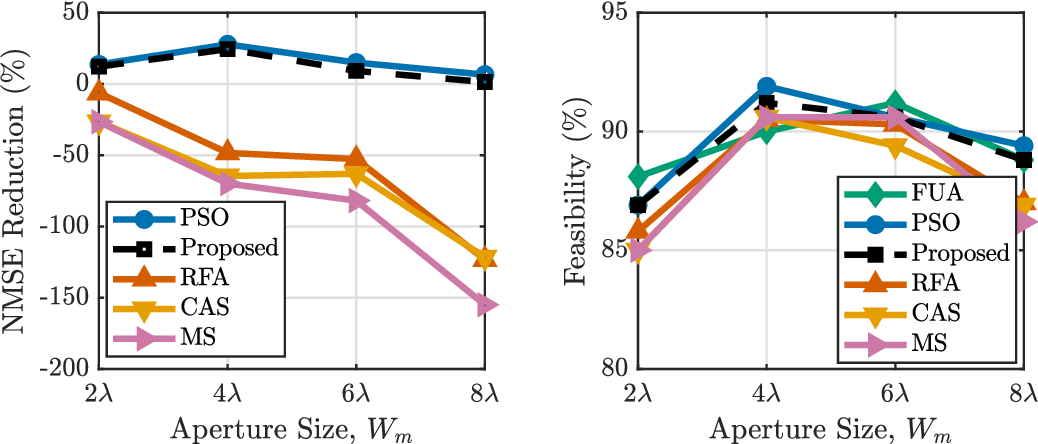}
		\caption{NMSE reduction relative to FUA (\%, left) and feasibility probability (\%, right) versus \(W_m\) for \(M=16\), \(K=8\), \(\tau_p=4\), and \(N=4\).}
		\label{fig:NMSEreduction_vs_Wm}
	\end{figure}
	\subsubsection{Impact of Network Load}
	\label{subsubsection:network_load}
	
	Fig.~\ref{fig:NMSEreduction_vs_K} demonstrates NMSE reduction relative to FUA and feasibility performance for varying UE density. The left panel of Fig.~\ref{fig:NMSEreduction_vs_K} shows that PSO achieves the highest reduction gain up to $17.8\%$, while the proposed SCA scheme attains a comparable gain of $13.2\%$. The heuristic schemes continue to exhibit degradation relative to FUA, with MS increasing NMSE by up to $2.3\times$ at low load. Importantly, gains diminish as $K$ increases for all schemes due to the increasing  co-pilot interference which dominates the estimation error under fixed $\tau_p$. Under such conditions, Algorithm~\ref{alg:sca_position} provides limited benefit since the contamination term in $\boldsymbol{\Psi}_{k,m}$ from~\eqref{eq:psi_matrix} masks the correlation structure improvements that SCA exploits. The right panel of Fig.~\ref{fig:NMSEreduction_vs_K} indicates a sharp feasibility decline from approximately 96\% at $K=8$ to 31--38\% at $K=32$, driven by pilot contamination elevating the NMSE floor and bandwidth limitations tightening per-UE SINR thresholds. These results indicate that FA positioning yields the greatest benefit under moderate loads, while heavily loaded scenarios require complementary pilot or bandwidth provisioning.	
	\begin{figure}[tb]
		\centering
		\includegraphics[scale=0.5]{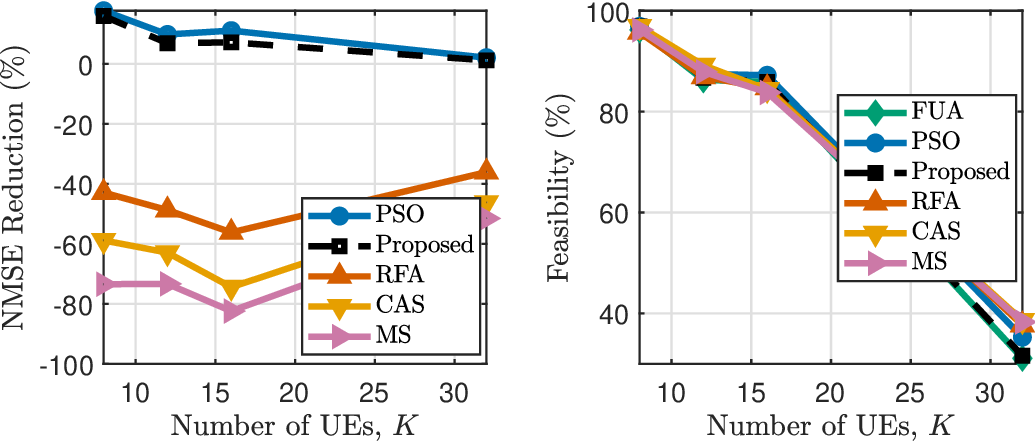}
		\caption{NMSE reduction relative to FUA (\%, left) and feasibility probability (\%, right) versus \(K\) for \(M=32\), \(\tau_p=4\), \(N=4\), and \(W_m=4\lambda\).}
		\label{fig:NMSEreduction_vs_K}
	\end{figure}

	Fig.~\ref{fig:runtime_vs_K} compares the runtime of the proposed SCA scheme against PSO under varying network load. The proposed scheme consistently achieves a speedup of $7$--$8\times$ across all evaluated loads compared to PSO. This computational advantage stems from the per-iteration complexity established in Section~\ref{para:comput_SCA}, where the gradient-based updates in Algorithm~\ref{alg:sca_position} require $\mathcal{O}(I_{\max} KMN^3)$ operations compared to $\mathcal{O}(P_sG \cdot KMN^3)$ for PSO with population size $P$ and $G$ generations. Notably, both schemes exhibit runtime growth with increasing $K$, as the gradient computation in SCA requires evaluating LMMSE covariances and their derivatives across all UEs and their respective pilot-sharing sets. Combined with the near-optimal NMSE reduction demonstrated in Section~\ref{subsubsection:network_load}, the proposed scheme offers a favorable tradeoff between solution quality and computational efficiency, making it suitable for practical deployment in CF mMIMO systems where periodic position reconfiguration is required.
	\begin{figure}[tb]
		\centering
		\includegraphics[scale= 0.5]{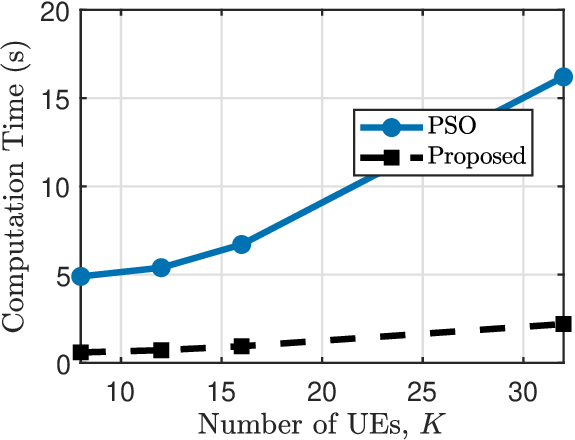}
		\caption{Runtime comparison between proposed and PSO schemes for varying $K$ under \(M=32\), \(\tau_p=4\), \(N=4\), and \(W_m=4\lambda\).}
		\label{fig:runtime_vs_K}
	\end{figure}
	\vspace{-3mm}
	\subsection{Performance of Proposed Pilot Power Allocation Scheme}
	\label{subsec:performance_power}

	\subsubsection{Outer AO Behavior}
	\label{subsubsec:convergence}
	Before evaluating the pilot power allocation schemes in detail, we assess the effectiveness of repeated AO between the proposed SCA-based antenna positioning algorithm (Algorithm~\ref{alg:sca_position}) and the BCD-based pilot power allocation algorithm (Algorithm~\ref{alg:bcd_pilot}) in providing further NMSE reduction beyond the sequential SCA--BCD design. In Table~\ref{tab:ao_cycle_comparison}, Cycle~1 denotes one SCA position update followed by one BCD pilot power update, while the subsequent cycles repeat these two blocks until the outer stopping condition or the maximum cycle count is reached. For setups that terminate early, the last available operating point is carried forward when computing the cycle-level averages. The first additional alternation reduces the aggregate mean NMSE by only $0.63\%$ relative to Cycle~1. Across Cycles~2--5, the aggregate mean NMSE remains within the narrow range $[0.024915,0.024952]$, and the maximum reduction relative to the sequential implementation remains below $0.64\%$. In contrast, the median cumulative optimization runtime becomes approximately three times that of the sequential implementation, while QoS feasibility decreases by $1.25$ percentage points. The marginal NMSE reduction is therefore not commensurate with the additional computational cost and does not improve QoS satisfaction. Accordingly, we retain the sequential SCA--BCD implementation in the remaining comparisons as the preferred performance-complexity operating point.
	\begin{table}[t]	
		\centering
		\caption{{Outer-AO Behavior with $M = 16$ APs, $K = 8$ UEs, $W_m = 4\lambda$, and $\tau_p = 4$.}}
		\label{tab:ao_cycle_comparison}
		\footnotesize
		\setlength{\tabcolsep}{3.5pt}
		\renewcommand{\arraystretch}{1.08}
		\begin{tabular}{ccccc}
			\hline
			\textbf{Cycle} & \textbf{Avg. NMSE} & \textbf{Change} & 
			\textbf{Feas.} & \textbf{Cum. time} \\
			& & \textbf{(\%)} & \textbf{(\%)} & \textbf{(s)} \\
			\hline
			1 & 0.025075 & 0.000 & 87.50 & 0.66 \\
			2 & 0.024917 & 0.631 & 86.25 & 1.33 \\
			3 & 0.024952 & 0.491 & 86.25 & 2.00 \\
			4 & 0.024915 & 0.637 & 86.25 & 2.00 \\
			5 & 0.024952 & 0.491 & 86.25 & 2.00 \\
			\hline
		\end{tabular}
	\end{table}
	
	\subsubsection{Large-Scale Tail and QoS Performance}
	\label{subsubsec:tail_estimation}
	Since URLLC reliability can be determined by unfavorable UEs rather than average performance alone, we next examine whether the average-NMSE gains of the selected sequential design extend to the upper tail and unfavorable-UE estimation quality. We consider a large-scale deployment with $M=200$, $K=100$, $N=4$, $\tau_p=20$, $L=16$, $B=120$~MHz, and a $500\times500$~m$^2$ area. Compared with the default configuration, this setting preserves $K/M=0.5$, while increasing the pilot-reuse ratio from $K/\tau_p=2$ to $5$ and reducing the per-UE bandwidth from $2.5$ to $1.2$~MHz. It therefore provides a more demanding large-scale evaluation rather than a direct replication of the default operating conditions. 
	
	Table~\ref{tab:large_scale_tail} reports the average, pooled $95$th-percentile, and pooled maximum NMSE, together with the empirical QoS-violation probability, aggregate pilot power, and median optimization runtime. The comparison includes FUA with full pilot power (FUA + Full), SCA positioning with full pilot power (SCA + Full), BCD pilot power allocation under the FUA geometry (FUA + BCD), and the sequential SCA--BCD design (SCA + BCD). These configurations represent the fixed, position-only, power-only, and joint designs, respectively. Relative to FUA with full pilot power, SCA--BCD reduces the average, $95$th-percentile, and maximum NMSE by $24.24\%$, $31.9\%$, and $18.5\%$, respectively. Moreover, SCA--BCD outperforms both the position-only and power-only configurations, confirming that antenna positioning and pilot power allocation provide complementary rather than interchangeable gains. Hence, the estimation-quality improvement extends beyond the average UE to the upper tail. The joint design also maintains the QoS-feasibility level under the more bandwidth-limited and pilot-reuse-intensive large-scale configuration, while reducing aggregate pilot power. Its average runtime is approximately one minute and is dominated by the positioning stage. Since the optimization is performed during periodic statistical-CSI updates rather than within the URLLC packet-service interval, this result supports the practical use of the sequential framework in large-scale deployments.
	
	Having established the scalability and tail behavior of the single-pass framework at $K=100$, we use the configuration mentioned in Section~\ref{subsec:simulation_setup} in the remaining experiments to enable broader Monte Carlo evaluation, parameter sweeps, and comparisons with pilot power baselines.
	
	\begin{table}[t]	
		\centering
		\caption{{Large-Scale Tail, QoS, and Computational Performance.}}
		\label{tab:large_scale_tail}
		\setlength{\tabcolsep}{2.0pt}
		\renewcommand{\arraystretch}{1.08}
		\begin{tabular}{lcccccc}
			\hline
			\textbf{Scheme}
			& \textbf{Avg.}
			& \textbf{95th}
			& \textbf{Max.}
			& \textbf{Viol.}
			& \textbf{Power}
			& \textbf{Time} \\
			& \textbf{NMSE}
			& \textbf{pct.}
			& \textbf{NMSE}
			& \textbf{prob.}
			& \textbf{(dBm)}
			& \textbf{(s)} \\
			\hline
			FUA + Full & 0.0257 & 0.1162 & 0.5031
			& 0.194 & 40.00 & -- \\
			SCA + Full & 0.0236 & 0.1055 & 0.4917
			& 0.182 & 40.00 & 52.86 \\
			FUA + BCD  & 0.0210 & 0.0836 & 0.4093
			& 0.192 & 39.17 & 8.99 \\
			SCA + BCD  & 0.0195 & 0.0791 & 0.4098
			& 0.182 & 38.28 & 64.36 \\
			\hline
		\end{tabular}
	\end{table}

	\subsubsection{Estimation Quality--Power Tradeoff}
	\label{subsubsec:estimation}
	To evaluate the proposed BCD-based pilot power allocation (Algorithm~\ref{alg:bcd_pilot}), we compare it against four benchmarks using the fixed antenna positions obtained from Algorithm~\ref{alg:sca_position}. Full power transmission (Full) sets 
	$p_k^{\mathrm{p}} = P_{\max}^p$ for all UEs and serves as the 
	baseline. Fractional programming (FP) reformulates the 
	sum-NMSE objective in~\eqref{eq:objective} as a 
	sum-of-ratios problem solved via the quadratic 
	transform~\cite{Sarker_RL_2026}. The zero-NMSE equilibrium method (ZEM) derives per-UE power from the $\mathrm{NMSE}_k \to 0$ condition 
	in~\eqref{eq:nmse_effective}, setting 
	$\Upsilon_k = \Gamma_k$ in~\eqref{eq:Upsilon_k} to 
	obtain a closed-form power expression that is 
	iteratively refined with damped fixed-point 
	updates~\cite{Sark2301:Pilot}. The exhaustive search 
	(Exhaustive) decomposes the $K$-dimensional problem into 
	$\tau_p$ independent subproblems by exploiting pilot-group 
	separability, solving each via grid search over $G = 200$ 
	power levels to obtain the global optimum up to 
	discretization error. {Note that the NMSE-to-QoS interpretation introduced in 
		Section~\ref{subsection:performance_positioning} also applies to 
		the pilot power allocation results reported below.}
	
	Fig.~\ref{fig:pareto} presents the Pareto-optimal tradeoff 
	between channel estimation quality and pilot energy consumption 
	enabled by the proposed scheme. Fig.~\ref{fig:pareto}(left) depicts the mean NMSE versus total pilot power for the proposed scheme, parameterized by the 
	regularization weight $\eta \in [0, 5]$, alongside the 
	benchmark methods appearing as single operating points. The 
	shaded region indicates $\pm 1$ standard deviation across 
	network realizations. At $\eta = 0$ (pure NMSE minimization), 
	Proposed~($\omega = 0$) achieves 30.6\% NMSE reduction 
	relative to Full, matching Exhaustive within 0.3 percentage 
	points. A penalty-free operating region emerges for the proposed scheme with 
	$\eta \leq 0.05$, where up to 33.5\% pilot power reduction 
	is attained with less than 0.1\% NMSE degradation relative 
	to the $\omega = 0$ solution. This near-lossless saving arises because $g_l^{\mathrm{self}}$ and
	$g_l^{\mathrm{cross}}$ in~\eqref{eq:gradient} nearly cancel at the
	NMSE optimum, leaving the power regularization term free to reduce
	energy without perturbing estimation quality. The auto-tuned operating 
	point at $\eta = 0.1$ achieves 38.8\% power reduction with 
	only 0.3\% NMSE penalty, positioned at the natural knee of 
	the Pareto frontier where marginal power savings begin to 
	incur measurable estimation loss. At $\eta = 0.2$, 
	Proposed~($\omega > 0$) strictly dominates ZEM by 
	simultaneously attaining 23.2\% and approximately 4\% lower NMSE  
	and pilot power, respectively, confirming 
	that the coordinate-wise gradient structure in 
	Section~\ref{sec:bcd_pilot} extracts superior estimation--power 
	tradeoffs compared to the ZEM scheme.
	
	Fig.~\ref{fig:pareto}(right) quantifies the NMSE gain and power 
	reduction as functions of $\eta$, with the auto-tuned default 
	$\eta = 0.1$ marked by the vertical dashed line. The NMSE 
	gain relative to Full remains above 26\% for 
	$\eta \leq 1.0$ and decreases to 13.1\% only at 
	$\eta = 5.0$, confirming robustness of the proposed 
	formulation across a wide regularization range. Power 
	reduction increases monotonically from 27.4\% to 86.4\%, 
	providing continuous and predictable control over the 
	estimation--energy operating point through the single parameter $\eta$ that controls the ratio of power cost to NMSE gradient magnitude, 
	as described in Section~\ref{sec:bcd_pilot}. This 
	tunability is particularly relevant for energy-constrained 
	URLLC deployments, where pilot transmission energy can be reduced while satisfying the probabilistic delay constraints in~\eqref{eq:delay_constraint} without compromising the channel estimation quality. 
	
	\begin{figure}[tb]
		\centering
		\includegraphics[scale=0.4]{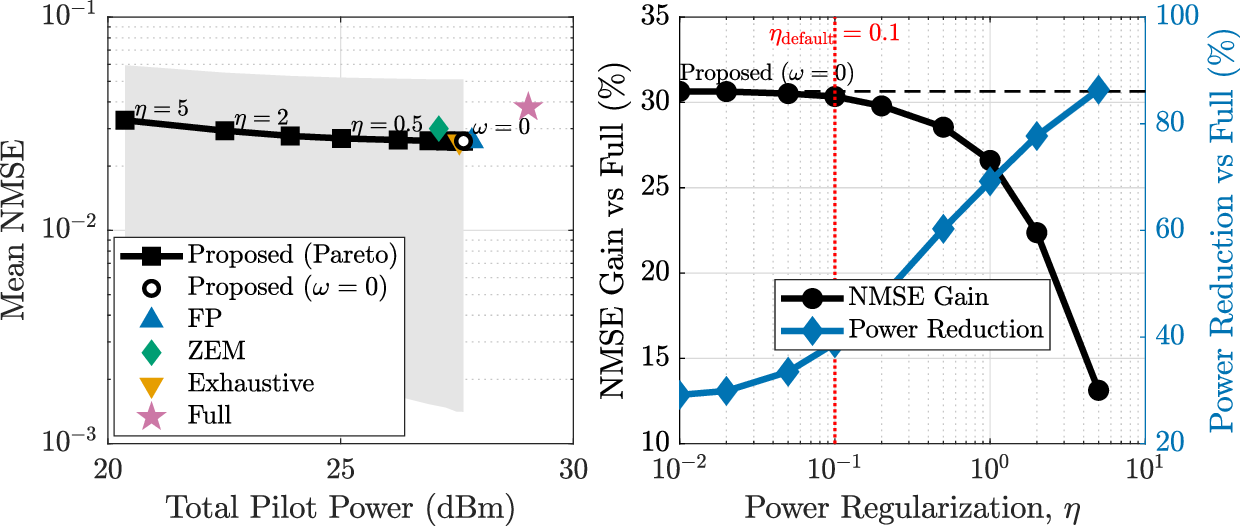}
		\caption{NMSE--power Pareto tradeoff for pilot power allocation
			with $M=16$, $K=8$, $N=4$, $\tau_p=4$, and $W_m=4\lambda$.}
		\label{fig:pareto}
	\end{figure}
	\subsubsection{Impact of UE Density}
	
	Fig.~\ref{fig:nmse_power_vs_K} evaluates pilot power allocation 
	performance for varying UE density. The left panel shows that both proposed variants achieve NMSE 
	reduction within 0.5 percentage points of the Exhaustive 
	optimum across all loads. At $K = 8$ ($K/\tau_p = 2$), 
	Proposed~($\omega = 0$) attains 30.7\% gain, matching FP 
	(30.7\%) and Exhaustive (31.0\%). As $K$ increases to 12, 
	all optimization-based methods experience a sharp decline to 
	approximately 17\%, after which the gains stabilize through 
	$K = 20$. This transition reflects the growing dominance of 
	pilot contamination in $\boldsymbol{\Psi}_{k,m}$ 
	from~\eqref{eq:psi_matrix}, where the co-pilot 
	interference contributions scale with $K/\tau_p$ 
	and progressively mask the estimation improvements 
	achievable through power adjustment alone. ZEM exhibits the steepest degradation, declining from 
	20.7\% at $K = 8$ to 8.7\% at $K = 20$, since the zero-NMSE 
	fixed-point target becomes increasingly unattainable under 
	heavy contamination, causing the iteration to converge to a 
	suboptimal equilibrium. For $K/\tau_p \geq 4$, Exhaustive 
	yields marginally higher NMSE than both Proposed and FP due 
	to insufficient grid resolution ($G^{K/\tau_p}$ candidates), 
	confirming that gradient-based methods achieve superior 
	solution quality when the combinatorial search space grows 
	beyond tractable discretization. QoS feasibility, determined by fundamental channel conditions and bandwidth limitations as established in Section~\ref{subsubsection:array_size}, remains within $\pm 1$ percentage point across all methods at each operating point.
	
	The right panel reveals that power reduction increases 
	monotonically with $K$ for all optimization-based schemes, 
	a trend opposite to the NMSE gain behavior. 
	Proposed~($\omega > 0$) achieves the largest reduction at 
	every load, ranging from 38.9\% at $K = 8$ to 59.5\% at 
	$K = 20$, exceeding FP by 5.0 to 14.7 percentage points 
	and Proposed~($\omega = 0$) by 3.8 to 11.1 percentage 
	points. This widening gap occurs because the power regularization
	in~\eqref{eq:pilot_obj} operates with greater effect as contamination
	intensifies: $g_l^{\mathrm{self}}$ in~\eqref{eq:g_self} saturates while
	$g_l^{\mathrm{cross}}$ in~\eqref{eq:g_cross} remains active, so the explicit penalty reduces energy without meaningful NMSE degradation. ZEM maintains 
	a relatively flat power reduction around 35--40\% across 
	all loads, reflecting the static nature of its zero-NMSE 
	target that does not adapt to the evolving interference 
	structure. These results indicate that the proposed framework 
	provides increasing energy efficiency gains precisely in 
	the high-load regime where pilot power budgets are most 
	constrained, complementing the NMSE improvements observed 
	at moderate loads.
	
	Fig.~\ref{fig:runtime_gap} compares the computational cost 
	of the optimization-based pilot power allocation schemes 
	across network loads. The proposed BCD scheme with 
	$\omega = 0$ achieves up to $3\times$ 
	speedup over FP at moderate loads ($K/\tau_p \leq 2$), 
	where the scalar Newton subproblems with analytical 
	gradient and Hessian from Lemma~\ref{lem:grad_hess} 
	operate more efficiently than the $K$-dimensional 
	auxiliary variable updates in FP's quadratic transform 
	at $\mathcal{O}(K^2 M N^3 / \tau_p)$ per iteration. As 
	$K/\tau_p$ increases beyond three, the sequential 
	per-UE structure of BCD and the batch update structure 
	of FP yield comparable runtime performance, with both 
	schemes maintaining execution within sub-second scales 
	suitable for periodic reconfiguration on the slow fading 
	timescale. The power-aware variant incurs moderate 
	additional overhead from the regularization-dependent 
	gradient evaluation and stricter convergence 
	requirements. ZEM achieves the lowest runtime across all 
	loads owing to its closed-form per-UE updates at 
	$\mathcal{O}(KMN^3)$ per iteration, though this 
	computational advantage accompanies up to 22 percentage 
	points lower NMSE gain relative to the proposed scheme 
	as observed in Fig.~\ref{fig:nmse_power_vs_K}. 
	Exhaustive search, operating at 
	$\mathcal{O}(\tau_p G^{K/\tau_p} 
	|{\mathcal{M}}| N^3)$, is over two orders of magnitude slower than both BCD and FP across 
	all evaluated loads, confirming its impracticality for 
	online deployment. The proposed BCD framework thus 
	provides a favorable balance between near-optimal 
	estimation quality, tunable energy efficiency through 
	the Pareto mechanism in Fig.~\ref{fig:pareto}, and 
	computational tractability across the practical 
	operating range of CF mMIMO networks supporting URLLC 
	traffic.
	
	\begin{figure}[tb]
		\centering
		\includegraphics[scale=0.45]{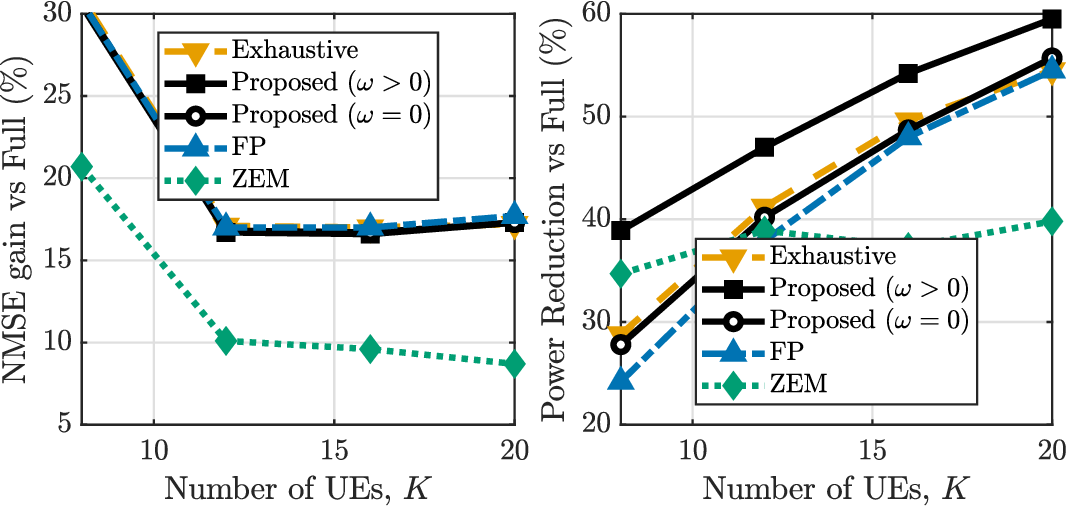}
		\caption{Impact of network load on pilot power allocation 
			performance: (left)~NMSE gain relative to full power transmission and 
			(right)~total pilot power reduction with $M=16$, $N=4$, $\tau_p=4$, and $W_m=4\lambda$. 
		}
		\label{fig:nmse_power_vs_K}
	\end{figure}
	
	\begin{figure}[tb]
		\centering
		\includegraphics[scale=0.4]{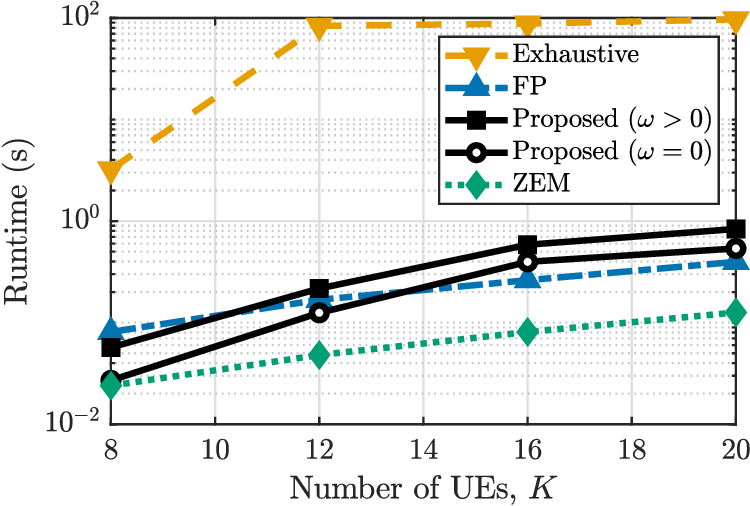}
		\caption{Computational cost of different pilot allocation schemes across 
			network loads with $M=16$, $N=4$, $\tau_p=4$, and $W_m=4\lambda$.}
		\label{fig:runtime_gap}
	\end{figure}
	\vspace{-4mm}
	\section{Conclusion}
	In this paper, we develop an AO framework for joint FA positioning and pilot power allocation in CF mMIMO networks serving URLLC traffic under FBL rate constraints and MMPP-driven bursty arrivals. An eigenstructure-based NMSE-SINR mapping translates queueing-level service requirements into physical-layer estimation targets, enabling decoupled optimization of antenna positions via SCA and pilot powers via BCD. The position optimizer achieves NMSE reduction comparable to global search at an order-of-magnitude lower cost, while the power allocator provides a tunable estimation-quality-energy tradeoff through auto-calibrated regularization that exploits the near-flat NMSE landscape near the optimum. Both stages execute within sub-second-scale runtimes suitable for periodic reconfiguration on the slow fading timescale. The results demonstrate that FA technology, coupled with estimation-aware resource optimization, provides meaningful gains in channel estimation quality and pilot
	energy efficiency within existing CF mMIMO architectures without modifying the TDD protocol or fronthaul signaling. Extending the framework to incorporate
	joint pilot sequence assignment, multi-antenna UEs with fluid antenna capability, and adaptive reconfiguration under mobility constitutes a natural direction for future work.
	\vspace{-3mm}
	\appendices
	\vspace{-3mm}
	\section{Proof of Lemma~\ref{lem:trace_approx}}
	\label{app:trace_approx}
	
	We establish the trace product approximation by analyzing the eigenstructure of the LMMSE estimator. Consider the pilot-contamination-free case $\mathcal{P}_k = \{k\}$; the extension to $|\mathcal{P}_k| > 1$ follows analogously with modified effective SNR.
	
	Let $\mathbf{R}_{k,m} = \mathbf{U} \boldsymbol{\Lambda} \mathbf{U}^\mathsf{H}$ be the eigendecomposition with $\boldsymbol{\Lambda} = \mathrm{diag}(\lambda_1, \ldots, \lambda_N)$ and $\lambda_1 \geq \cdots \geq \lambda_N > 0$. From \eqref{eq:psi_matrix}, the observation covariance becomes $\boldsymbol{\Psi}_{k,m} = \tau_p p_k^{\mathrm{p}} \beta_{k,m} \mathbf{R}_{k,m} + \sigma^2 \mathbf{I}_N$, which shares the same eigenvectors as $\mathbf{R}_{k,m}$. Substituting into \eqref{eq:phi_matrix} and simplifying yields the channel estimate covariance in diagonal form:
	\begin{equation}
		\boldsymbol{\Phi}_{k,m} = \beta_{k,m} \mathbf{U} \, \mathrm{diag}\left( \frac{\rho_{k,m} \lambda_i^2}{1 + \rho_{k,m} \lambda_i} \right)_{i=1}^N \mathbf{U}^\mathsf{H},
		\label{eq:phi_eigen_app}
	\end{equation}
	where $\rho_{k,m} = \tau_p p_k^{\mathrm{p}} \beta_{k,m} / \sigma^2$ is the effective pilot SNR.
	
	The trace products follow directly from \eqref{eq:phi_eigen_app}:
	\begin{align}
		\mathrm{tr}(\boldsymbol{\Phi}_{k,m}) &= \beta_{k,m} \sum_{i=1}^{N} \frac{\rho_{k,m} \lambda_i^2}{1 + \rho_{k,m} \lambda_i}, \label{eq:tr_phi_app} \\
		\mathrm{tr}(\boldsymbol{\Phi}_{k,m} \mathbf{R}_{k,m}) &= \beta_{k,m} \sum_{i=1}^{N} \frac{\rho_{k,m} \lambda_i^3}{1 + \rho_{k,m} \lambda_i}. \label{eq:tr_phi_R_app}
	\end{align}
	Dividing \eqref{eq:tr_phi_R_app} by \eqref{eq:tr_phi_app} gives the exact self-alignment factor $\eta_{k,m} = \mathrm{tr}(\boldsymbol{\Phi}_{k,m} \mathbf{R}_{k,m}) / \mathrm{tr}(\boldsymbol{\Phi}_{k,m})$. In contrast, the static factor from Definition~\ref{def:alignment_factors} is $\bar{\eta}_{k,m} = \sum_i \lambda_i^2 / \sum_i \lambda_i$.
	
	To quantify their difference, define the weight $w_i = \rho_{k,m} \lambda_i / (1 + \rho_{k,m} \lambda_i) \in (0,1)$, which approaches unity as $\rho_{k,m} \to \infty$. The exact factor can be written as $\eta_{k,m} = \sum_i w_i \lambda_i^2 / \sum_i w_i \lambda_i$, revealing that $\eta_{k,m} \to \bar{\eta}_{k,m}$ in the high-SNR limit since all weights converge to one.
	
	For finite SNR, defining $S_n = \sum_i \lambda_i^n$ and
	$D_n = \sum_i \lambda_i^n/(1+\rho_{k,m}\lambda_i)$,
	algebraic manipulation yields
	$\eta_{k,m} - \bar{\eta}_{k,m} = (D_1 S_2 - D_2 S_1)/(W_1 S_1)$, where $W_n = S_n - D_n$. Applying $D_n \leq S_n/(1+\rho_{k,m}\lambda_{\min})$ and
	standard trace inequalities~\cite{horn2012matrix}:
	\begin{equation}
		|\eta_{k,m} - \bar{\eta}_{k,m}| \leq
		\frac{(\lambda_{\max}-\lambda_{\min})^2}
		{4\bar{\lambda}(1+\rho_{k,m}\lambda_{\min})},
	\end{equation}
	where $\bar{\lambda} = \mathrm{tr}(\mathbf{R}_{k,m})/N$. Expressing the eigenvalue spread in terms of the condition number $\kappa_{k,m} = \lambda_{\max}/\lambda_{\min}$ gives the stated bound $\mathcal{O}((\kappa_{k,m} - 1)^2 / \rho_{k,m})$. The cross-alignment bound follows analogously using
	$\mathrm{tr}(\mathbf{A}\mathbf{B})\leq
	\|\mathbf{B}\|_2\mathrm{tr}(\mathbf{A})$ for positive
	semidefinite~$\mathbf{A}$, yielding
	$\mathcal{O}((\kappa_{k,m}-1)/\rho_{k,m})$. Under pilot
	contamination, $\rho_{k,m}$ is replaced by
	$\rho_{k,m}^{\mathrm{eff}}$ from~\eqref{eq:rho_eff}. \hfill $\blacksquare$
	
	\vspace{-3mm}
	\section{Proof of Proposition~\ref{prop:nmse_variation}}
	\label{app:nmse_variation}
	
	We derive the NMSE variation bound for UC clusters with bounded large-scale fading spread.
	
	Under uniform pilot power and $\mathbf{R}_{k,m}\approx
	\mathbf{R}_k$ for all $m\in\mathcal{M}_k$, the NMSE
	from~\eqref{eq:nmse_explicit} becomes
	$\mathrm{NMSE}_{k,m} = \frac{1}{N}\sum_{i=1}^{N}
	(1+\rho_{k,m}\lambda_i)^{-1}$ with
	$\rho_{k,m}=\tau_p p_k^{\mathrm{p}}\beta_{k,m}/\sigma^2$.
	Applying the mean value theorem to $f_i(\rho) =
	(1+\rho\lambda_i)^{-1}$ between $\rho_{k,m}$ and
	$\rho_{k,m'}$, bounding $|f_i'(\xi_i)|\leq
	\lambda_i(1+\rho_{\min}\lambda_i)^{-2}$, and using
	$|\rho_{k,m}-\rho_{k,m'}|\leq
	(\tau_p p_k^{\mathrm{p}}/\sigma^2)
	(\beta_k^{\max}-\beta_k^{\min})$ with
	$\sum_i \lambda_i = N$ yields
	\begin{equation}
		|\mathrm{NMSE}_{k,m}-\mathrm{NMSE}_{k,m'}| \leq
		\frac{\tau_p p_k^{\mathrm{p}} \Delta\beta_k}
		{\sigma^2(1+\rho_{\min})^2}.
	\end{equation}
	Substituting $\rho_{\min} = \tau_p
	p_k^{\mathrm{p}}\beta_k^{\min}/\sigma^2$ and simplifying
	gives~\eqref{eq:delta_k} which completes the proof. \hfill $\blacksquare$
	\vspace{-3mm}
	\balance
	\bibliographystyle{IEEEtran}
	\bibliography{FAS}
\end{document}